\titlespacing*{\section}{0pt}{11pt}{11pt}
\titlespacing*{\subsection}{0pt}{11pt}{11pt}
\newtheorem*{rep@theorem}{\rep@title}
\newcommand{\newreptheorem}[2]{%
\newenvironment{rep#1}[1]
{ \def\rep@title{#2 \ref{##1}} \begin{rep@theorem}} {\end{rep@theorem}} }
\newtheorem{theorem}{Theorem}
\newtheorem{lemma}[theorem]{Lemma}
\newtheorem{proposition}[theorem]{Proposition}
\newtheorem{corollary}[theorem]{Corollary}
\theoremstyle{definition}
\newcommand{\cB}{\mathcal{B}}
\newcommand{\cW}{\mathcal{W}}
\newcommand{\cV}{\mathcal{V}}
\newcommand{\E}{{\bf E}}
\DeclareMathOperator{\poly}{poly}
\DeclareMathOperator{\Var}{var}
\DeclareMathOperator{\FWPIT}{FindWeightIT}
\DeclareMathOperator{\FITBD}{FindITorBD}
\begin{document}

\title{Algorithms for weighted independent transversals and strong colouring}
\author[1]{Alessandra Graf}
\author[2]{David G. Harris}
\author[1]{Penny Haxell\thanks{Partially supported by NSERC.}}
\affil[1]{Department of Combinatorics and Optimization, University of Waterloo, Waterloo, ON, Canada}
\affil[2]{Department of Computer Science, University of Maryland, College Park, MD, United States}

\maketitle

\begin{abstract}
An {\it independent transversal} (IT) in a graph with a given vertex partition is an independent set consisting of one vertex in each partition class. Several sufficient conditions are known for the existence of an IT in a given graph and vertex partition, which have been used over the years to solve many combinatorial problems. Some of these IT existence theorems have algorithmic proofs, but there remains a gap between the best existential bounds and the bounds obtainable by efficient algorithms. 

Recently, Graf and Haxell (2018) described  a new (deterministic) algorithm that asymptotically closes this gap, but there are limitations on its applicability. In this paper we develop a randomized  algorithm that is much more widely applicable, and demonstrate its use by giving efficient algorithms for two problems concerning the strong chromatic number of graphs.
\end{abstract}

This is an extended version of an article which appeared in the  ACM-SIAM Symposium on Discrete Algorithms (SODA) 2021.

\section{Introduction}\label{intro}

Let $G = (V(G), E(G))$ be a graph with a partition $\mathcal V$ of its vertices; the elements of $\mathcal V$ are non-empty subsets of $V(G)$, which we refer to as \emph{blocks}.   For a vertex $v \in V(G)$, we let $\mathcal V(v)$ denote the unique block $U \in \mathcal V$ with $v \in U$.  The \emph{minimum blocksize} $b^{\text{min}}(\mathcal V)$ (or just $b^{\min}$ if $\mathcal V$ is understood) is the minimum size of any block in $\mathcal V$.  We say that $\mathcal V$ is \emph{$b$-regular} if  every block $U$ has size exactly $b$.  

We let $n = |V(G)|$ denote the number of vertices in $G$.  The \emph{neighbourhood} $N(v)$ of a vertex $v$ is the set of vertices $u \in V(G)$ with $uv \in E(G)$. The \emph{maximum degree} $\Delta(G)$  is the maximum number of neighbours of any vertex; again, if $G$ is understood, we write simply $\Delta$.

An independent set $I$ of $G$ is called an {\it independent transversal} (IT) of $G$ with respect to $\mathcal V$ if $|I \cap U| = 1$ for all $U \in \mathcal V$; likewise, $I$ is a {\it partial independent transversal} (PIT) of $G$ with respect to $\mathcal V$ if $|I \cap U| \leq 1$ for all $U \in \mathcal V$.   Many combinatorial problems can be formulated in terms of ITs in graphs with respect to given vertex partitions (see e.g.~\cite{Haxell2011}). Various results give sufficient conditions for the existence of an IT (e.g.~\cite{Alon1988, Fellows1990, Haxell1995, Haxell2001, Alon2008, Bissacot2011}).  In particular, \cite{Haxell1995, Haxell2001} showed the following:
\begin{theorem}[\cite{Haxell1995,Haxell2001}]
\label{th1ex}
If $G$ has a vertex partition  with $b^{\min}\geq 2 \Delta (G)$, then an IT of $G$ exists. 
\end{theorem}

This bound is optimal, since \cite{Szabo2006} showed that blocks of size $2\Delta-1$ are not sufficient to guarantee the existence of an IT. 

There is another important extension involving weighted ITs in the setting of vertex-weighted graphs. Theorem~\ref{th1ex} (which merely shows the existence of an IT without regard to weight) is not sufficient for these applications. For a weight function $w:V(G)\to\mathbb R$ and a subset $U\subseteq V(G)$, we write $w(U)=\sum_{u\in U}w(u)$ and $w^{\max}(U) = \max_{u \in U} w(u)$. We also write $w(G) = w(V(G)) = \sum_{v \in V(G)} w(v)$.  We say $w$ is \emph{non-negative} if $w(v) \geq 0$ for all $v$.  

Aharoni, Berger, \& Ziv \cite{Aharoni2007} showed the following (in a different but equivalent formulation):
\begin{theorem}[\cite{Aharoni2007}]
\label{th2ex}
If $G$ has a $b$-regular vertex partition with $b \geq 2 \Delta (G)$, then for any weight function $w: V(G) \rightarrow \mathbb R$ there exists an IT $M$ of $G$ with $w(M) \geq w(G)/b$.
\end{theorem}

The proofs of Theorem~\ref{th1ex} and Theorem~\ref{th2ex} are not algorithmic. There are some algorithms to efficiently return an IT given a graph $G$ and vertex partition $\mathcal V$, some of which can handle weighted graphs (see~\cite{Annamalai2017, Harris2016, Harris2017}). These mostly rely on algorithmic versions of the Lov\'asz Local Lemma (LLL); they typically give an IT under more stringent conditions of the form $b^{\text{min}} \geq c \Delta$, where $c$ is a constant strictly larger than $2$. The algorithm of \cite{Harris2016} has the condition $b^{\text{min}} \geq 4 \Delta -1 $, which is the strongest known criterion of this form.  

Recently, Graf \& Haxell \cite{Graf2018} developed a new algorithm, called $\FITBD$~\cite{Graf2018}, to find either an IT in $G$ or a set of blocks with a small dominating set $D$ which has some additional properties. The algorithm uses ideas from the original proof of Theorem~\ref{th1ex} and modifications of several key notions (including ``lazy updates'') from Annamalai \cite{Annamalai2017}. 

 To describe $\FITBD$, we require a few definitions.  A vertex set $D$ {\it dominates} another vertex set $W$ in $G$ if for all $w\in W$, there exists $uw \in E(G)$ for some $u\in D$. (This is also known as {\it strong domination} or {\it total domination}, but it is the only notion of domination we need so we use the simpler term.) For a subset $\cB \subseteq  \mathcal V$ of the vertex partition, we write $V(\cB) = \bigcup_{U \in\cB} U$. A {\it constellation} $K$ for $\cB$ is a pair of disjoint vertex sets $K^{\text{centre}}, K^{\text{leaf}} \subseteq V(\cB)$ with the following properties:
\begin{itemize}
\item $|K^{\text{leaf}}| = |\cB| - 1$
\item $K^{\text{leaf}}$ is a PIT of $G$ with respect to $\cV$
\item Each vertex $v \in K^{\text{centre}}$ has no neighbours in $K^{\text{centre}}$ and at least one neighbour in $K^{\text{leaf}}$
\item Each vertex $v \in K^{\text{leaf}}$ has  exactly one neighbour in $K^{\text{centre}}$ and no neighbours in $K^{\text{leaf}}$
\end{itemize}

We write $V(K)$ for the vertex set $K^{\text{centre}} \cup K^{\text{leaf}}$. The induced graph on $V(K)$, denoted $G[V(K)]$,  is thus a collection of stars, with centres and leaves in $K^{\text{centre}}$ and $K^{\text{leaf}}$ respectively.  These stars are all non-degenerate in the sense that they have at least one leaf.

We state a (slightly simplified) summary of the algorithm $\FITBD$ as follows:
\begin{theorem}[\cite{Graf2018}]\label{FITBD} 
The algorithm $\FITBD$ takes as input a parameter $\epsilon \in (0,1)$ and a graph $G$ with vertex partition $\mathcal V$ and finds either:  
\begin{enumerate}
\item an IT in $G$, or
\item a non-empty set $\cB \subseteq \mathcal V$ and a vertex set $D \subseteq V(G)$ such that $D$ dominates $V(\cB)$ in $G$ and $|D|<(2+\epsilon)(|{\mathcal B}|-1)$. Moreover, there is a constellation $K$ for some $\cB_0\supseteq\cB$ with $V(K) \subseteq D$  and $|D\setminus V(K)|<\epsilon(|\cB|-1)$.
\end{enumerate}
If $\Delta(G)$  and $\epsilon$ are fixed, then the runtime is $\poly(n)$.
\end{theorem}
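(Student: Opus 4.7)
The plan is to iteratively build a PIT and, following Haxell's original existence argument, maintain an arborescence of candidate swaps; the key algorithmic trick is to use Annamalai's lazy-update framework to keep the per-iteration work polynomial. Initialize $I = \emptyset$ and, while some block $U$ remains uncovered, pick $U$ and try to add a free vertex of $U$ to $I$. If every vertex of $U$ is blocked by some neighbour in $I$, grow a rooted alternating tree: a candidate $w \in U$ has as children the PIT vertices in its neighbourhood, and each such centre $c \in I \cap U'$ has as children candidate replacements for $c$ in $U'$. An unblocked leaf in this tree triggers a chain of swaps propagating back to the root, augmenting $I$ while preserving coverage of all already-covered blocks. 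If no unblocked leaf appears before the tree reaches a size threshold roughly $(2+\epsilon)(|\mathcal{B}|-1)$, we halt and extract case~2.

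To show that case~2 is correctly realised, identify $K^{\text{leaf}}$ with the non-PIT leaves of the final arborescence and $K^{\text{centre}}$ with the PIT vertices that were placed as internal nodes. The invariants that each leaf is adjacent to exactly one centre, and that centres are pairwise non-adjacent, follow from the way the tree is grown (a centre is added only at the moment its PIT vertex becomes a swap target, and candidate leaves are chosen from distinct blocks via the PIT property). Domination of $V_{\mathcal{B}_0}$ holds because a vertex is placed as a leaf only when every alternative in its block already has a neighbour in $D$. The count $|V(K)| \le 2(|\mathcal{B}_0|-1)$ comes from the one-centre-per-leaf matching inherent in the constellation structure, while the additional slack $|D \setminus V(K)| < \epsilon(|\mathcal{B}|-1)$ is exactly the set of ``pending'' lazy vertices queued for a later update batch but not yet integrated into $K$.

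The main obstacle is the runtime analysis. Naively, a single augmentation can force a rebuild of the tree, producing exponential cascades. I would resolve this by Annamalai-style lazy updates: defer processing of newly added leaves until a batch of size $\Theta(\epsilon|\mathcal{B}|)$ accumulates, and combine this with a potential function tracking the total depth and breadth of unresolved swap requests in the arborescence. Because $\Delta(G)$ is bounded, each vertex contributes only $O(1)$ child candidates, so each batch costs polynomial work; the potential decreases by an $\epsilon$-dependent amount per batch, or else the stopping threshold fires, yielding a total iteration count polynomial in $|V(G)|$ and matching the claimed $\poly(|V(G)|)$ bound.
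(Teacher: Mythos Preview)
This theorem is not proved in the present paper at all: it is quoted as a black-box result from \cite{Graf2018} (Graf--Haxell), and the paper only \emph{uses} $\FITBD$ as a subroutine inside $\FWPIT$. So there is no ``paper's own proof'' to compare your attempt against.

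That said, your sketch does line up with the high-level description the introduction gives of the Graf--Haxell algorithm (Haxell's alternating-tree argument made algorithmic via Annamalai-style lazy updates). A few details are off relative to the statement you are trying to prove: the set $D$ is required to dominate $V_{\cB}$, not $V_{\cB_0}$; the constellation has $|K^{\text{leaf}}| = |\cB_0| - 1$ exactly, while $|K^{\text{centre}}|$ is not bounded by $|\cB_0|-1$ in general (only via the star structure and the overall bound on $|D|$); and your account of how the $\epsilon(|\cB|-1)$ slack arises from ``pending lazy vertices'' is plausible but would need the actual invariants from \cite{Graf2018} to be made precise. The runtime paragraph is the weakest part: ``the potential decreases by an $\epsilon$-dependent amount per batch'' is exactly the nontrivial content of the Graf--Haxell analysis, and asserting it is not the same as proving it. If you want a genuine proof you should consult \cite{Graf2018} directly; for the purposes of this paper, the theorem is simply cited.
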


 It is easy to show that if $b^{\min}(\mathcal V)\ge2\Delta(G)+1$, then no vertex set  $V(\cB)$ is dominated by a set of size less than $(2+\tfrac{1}{\Delta(G)})(|\cB|-1)$. This leads to the following result:

\begin{corollary}[\cite{Graf2018}]
\label{maxdegalg}
The algorithm $\FITBD$ takes as input a graph $G$ and a vertex partition with $b^{\text{min}} \geq 2 \Delta + 1$ and returns an IT in $G$.   If $\Delta \leq O(1)$, the runtime is $\poly(n)$.
\end{corollary}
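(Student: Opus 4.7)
The plan is to invoke Theorem~\ref{FITBD} with a carefully chosen parameter $\epsilon$ and rule out its second outcome by a direct counting argument on the domination.

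First I would set $\epsilon = 1/\Delta(G)$ (or any value strictly less than this) and feed the graph $G$, the partition $\mathcal V$, and $\epsilon$ to $\FITBD$. If the algorithm returns an IT, we are done immediately. Otherwise, it produces a non-empty $\cB \subseteq \mathcal V$ and a set $D \subseteq V(G)$ such that $D$ dominates $V_\cB$ and $|D| < (2+\epsilon)(|\cB|-1)$. The goal is then to show this outcome is inconsistent with the blocksize assumption.

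The key step is the counting bound hinted at just before the corollary statement. Since each vertex of $D$ has at most $\Delta(G)$ neighbours in $G$, the union of its neighbourhoods has size at most $\Delta(G)|D|$. Because $D$ must dominate every vertex of $V_\cB$, this gives
\[
\Delta(G)\,|D| \;\ge\; |V_\cB| \;=\; \sum_{U\in\cB}|U| \;\ge\; b^{\min}(\mathcal V)\,|\cB| \;\ge\; (2\Delta(G)+1)\,|\cB|.
\]
Dividing by $\Delta(G)$ yields $|D| \ge \bigl(2+\tfrac{1}{\Delta(G)}\bigr)|\cB| > \bigl(2+\tfrac{1}{\Delta(G)}\bigr)(|\cB|-1)$. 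Combining this with our choice $\epsilon \le 1/\Delta(G)$ gives
\[
|D| \;>\; \Bigl(2+\tfrac{1}{\Delta(G)}\Bigr)(|\cB|-1) \;\ge\; (2+\epsilon)(|\cB|-1),
\]
contradicting the output guarantee of option~2. Hence only option~1 is possible, and $\FITBD$ returns an IT.

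Finally, for the runtime claim, I would simply note that $\epsilon = 1/\Delta(G)$ is a fixed constant whenever $\Delta(G) \le O(1)$, so the polynomial runtime guarantee of Theorem~\ref{FITBD} applies directly. There is no genuine obstacle in this argument: the whole content is the elementary domination count above, and the only small subtlety is choosing $\epsilon$ small enough (any $\epsilon \le 1/\Delta(G)$ suffices) to make the threshold in option~2 fall strictly below the counting lower bound on $|D|$.
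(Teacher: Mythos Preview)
Your proposal is correct and follows exactly the approach the paper indicates: the sentence preceding the corollary states that if $b^{\min}(\mathcal V)\ge 2\Delta(G)+1$ then no $V_\cB$ can be dominated by a set of size less than $(2+\tfrac{1}{\Delta(G)})(|\cB|-1)$, which is precisely your counting argument, and choosing $\epsilon \le 1/\Delta(G)$ then rules out option~2 of Theorem~\ref{FITBD}. The only cosmetic point is to ensure $\epsilon\in(0,1)$ as required by Theorem~\ref{FITBD} (e.g.\ take $\epsilon=\min\{1/\Delta(G),1/2\}$), but this does not affect the argument.
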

This asymptotically matches the bound of Theorem~\ref{th1ex}. Thus Theorem~\ref{FITBD} and Corollary~\ref{maxdegalg} offer the possibility of new algorithmic proofs; see \cite{Graf2018, GrafThesis} for further details and applications.

From a combinatorial point of view, the algorithm $\FITBD$ is nearly optimal. However, from an algorithmic point of view, it is limited by its dependence on $\Delta$ and $\epsilon$, making it efficient only when these parameters are constant.  The aim of this paper is to give a new (randomized) algorithm that overcomes this limitation, as well as extending to the setting of weighted ITs. Our main theorem is as follows.

\begin{theorem}\label{main}
There is a randomized algorithm which takes as inputs a parameter $\epsilon > 0$, a graph $G$ with a $b$-regular vertex partition where $b \geq (2 + \epsilon) \Delta$, and a weight function $w: V(G) \rightarrow \mathbb R$, and finds an IT in $G$ with weight at least $w(G)/b$. For fixed $\epsilon$, the expected runtime is $\poly(n)$.
\end{theorem}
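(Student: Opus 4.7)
The plan is to reduce Theorem~\ref{main} to Corollary~\ref{maxdegalg} via random subsampling of the blocks, and to handle the weight guarantee by an averaging and boosting argument.

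First, fix a constant $b' = b'(\epsilon) = \Theta(\epsilon^{-2} \log(1/\epsilon))$ and, for each block $U \in \mathcal V$, sample a uniformly random subset $U' \subseteq U$ of size $b'$ independently across blocks. Let $\mathcal V' = \{U' : U \in \mathcal V\}$ and $G' = G[V_{\mathcal V'}]$. A Chernoff plus union bound argument shows that, with high probability over the subsampling, the maximum degree of $G'$ satisfies $\Delta(G') \leq b'/(2 + \epsilon/2)$, since for each vertex the neighbours are retained independently with probability $b'/b \leq b'/((2+\epsilon)\Delta(G))$. Conditioned on this event, Corollary~\ref{maxdegalg} applied to $(G', \mathcal V')$ returns an IT $I$ in $\poly(|V(G)|)$ time, because $b'$ and hence $\Delta(G')$ are constants depending only on $\epsilon$; this $I$ is also an IT of $G$ with respect to $\mathcal V$.

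For the weight analysis, observe that since $I$ contains exactly one vertex per block, $\sum_{v \in U} \Pr[v \in I] = 1$ for each $U \in \mathcal V$, so the average marginal over $v \in U$ is $1/b$. To turn this into a pointwise lower bound $\Pr[v \in I] \geq 1/b$ (and hence $\E[w(I)] \geq w(G)/b$ by linearity), I would symmetrise the procedure with respect to labels within each block: combined with the uniform random choice of $U'$, every $v \in U$ then plays an interchangeable role in the probability space, forcing $\Pr[v\in I] = 1/b$. To convert the expectation bound into a hard guarantee, run the subsample-plus-$\FITBD$ procedure on $N = \poly(|V(G)|)$ independent subsamples and output the IT of largest weight; concentration of $w(I)$ (or, failing that, a reverse-Markov argument on $\max_j w(I_j)$ followed by retrial) gives $w(I) \geq w(G)/b$ in expected polynomial time.

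The main obstacle I anticipate is making the pointwise marginal bound rigorous: $\FITBD$ is deterministic and can in principle favour certain vertices within a block based on the subsampled graph structure, so the intuition that ``uniform subsampling plus label randomisation averages everything out'' requires a careful argument --- a toy example shows that simply relabelling within blocks and translating back does not yield symmetric marginals in general. Overcoming this will likely require either a specialised randomisation of the internals of $\FITBD$ (e.g.\ of the tie-breaking used in its lazy-update and constellation construction), or the design of new IT-finding subroutines (perhaps the $\FMIT$ and $\FWPIT$ procedures suggested by the macros in the preamble) that exploit the constellation structure of Theorem~\ref{FITBD} to control the marginal probabilities directly.
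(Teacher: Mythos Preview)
Your proposal has the right high-level shape --- degree reduction followed by a bounded-degree IT routine --- but there are two genuine gaps, and the second one (which you flag yourself) is exactly where all the work lies.

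\textbf{Degree reduction via one-shot subsampling does not work.} With $b' = \Theta(\epsilon^{-2}\log(1/\epsilon))$, the Chernoff failure probability for a single vertex exceeding degree $b'/(2+\epsilon/2)$ is $\exp(-\Theta(b'\epsilon^2)) = \epsilon^{\Theta(1)}$, a constant independent of $n$. A union bound over $n$ vertices therefore fails. The paper handles this by using the Moser--Tardos algorithmic LLL (Theorem~\ref{mt-thm}), where each bad-event ``vertex $v$ has too many surviving neighbours'' depends only on $O(b\Delta)$ other bad-events, and by halving iteratively rather than in one shot so that the weight loss can be controlled round by round (Lemma~\ref{sparsify-one-round}, Lemma~\ref{sparse-multi}).

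\textbf{The marginal bound $\Pr[v\in I]=1/b$ cannot be rescued by symmetrisation}, and this is the real obstruction. Permuting labels within a block does not make vertices exchangeable in the joint distribution of (subsampled graph, output of $\FITBD$), because the edge structure is not invariant under such permutations; your toy example already shows this. The paper abandons the marginal-probability route entirely. Instead, it builds a \emph{weight-aware} subroutine $\FWPIT$ (Section~\ref{FWPIT}) that runs $\FITBD$ on the set $W$ of block-maximum-weight vertices, and when $\FITBD$ returns a dominating set $D$ with constellation $K$, it subtracts $|N(v)\cap D|$ from every weight, recurses, and then locally swaps in leaves of $K$ to recover the lost weight. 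The constellation structure is used not to control marginals but to certify that these swaps do not decrease $w'(M)$ (Proposition~\ref{decwprop}) and that enough weight is regained (Proposition~\ref{yprop}). This gives an IT of weight exactly $\ge w(G)/b$ deterministically, but only in time $\poly(n,|w|)$ for fixed $b$.

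\textbf{Getting the exact $w(G)/b$ rather than $(1-\lambda)w(G)/b$.} Even granting everything else, your retrial step cannot close the gap: if $\E[w(I)] = w(G)/b$, the event $\{w(I)\ge w(G)/b\}$ can have probability as small as $1/\poly(n)$ or worse, and concentration only yields $(1-o(1))w(G)/b$. The paper faces the same $(1-\lambda)$ loss after degree reduction and removes it by a separate LP oversampling argument (Section~\ref{sec:LP}, Theorem~\ref{ABZ-thm4}): it shifts the weights down by the block threshold $w_{U,s_U}$, inflates the fractional multiplicities of the surviving high-weight vertices by $1/(1-\epsilon)$, applies the lossy algorithm to the shifted non-negative weights, and then shifts back --- the loss is absorbed because the shifted weight of the discarded vertices is non-positive.
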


If we disregard vertex weights, this gives the immediate corollary:
\begin{corollary}
\label{maxdegalg2}
There is a randomized algorithm which takes as inputs a parameter $\epsilon > 0$ and a graph $G$ with a vertex partition where $b^{\text{min}} \geq (2 + \epsilon) \Delta$, and finds an IT of $G$. For fixed $\epsilon$, the expected runtime is $\poly(n)$.
\end{corollary}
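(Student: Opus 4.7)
The plan is to reduce Corollary \ref{maxdegalg2} to Theorem \ref{main} by a straightforward block-restriction. Theorem \ref{main} requires a $b$-regular partition with $b \geq (2+\epsilon)\Delta(G)$, whereas the corollary only guarantees $b^{\text{min}}(\mathcal V) \geq (2+\epsilon)\Delta(G)$ with possibly varying block sizes. So the task is to build an auxiliary $b$-regular instance whose IT lifts to an IT of the original instance.

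Set $b = \lceil (2+\epsilon)\Delta(G)\rceil$. Since $b^{\text{min}}(\mathcal V)$ is an integer with $b^{\text{min}}(\mathcal V) \geq (2+\epsilon)\Delta(G)$, we have $|U| \geq b$ for every $U \in \mathcal V$. For each such $U$, arbitrarily choose a subset $U' \subseteq U$ with $|U'| = b$, and let $\mathcal V' = \{U' : U \in \mathcal V\}$ and $V' = \bigcup_{U \in \mathcal V} U'$. Take $G' = G[V']$, the induced subgraph on $V'$. Then $\mathcal V'$ is a $b$-regular partition of $V(G')$, and because $G'$ is induced, $\Delta(G') \leq \Delta(G)$ and hence $b \geq (2+\epsilon)\Delta(G')$, so the hypothesis of Theorem \ref{main} is satisfied for $(G',\mathcal V')$.

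Now invoke Theorem \ref{main} on input $(\epsilon, G', \mathcal V', w)$ where $w \equiv 1$ is any convenient weight function (we only need an IT, not a weight guarantee). In expected $\poly(|V(G')|) = \poly(|V(G)|)$ time, the algorithm returns an IT $I$ of $G'$ with respect to $\mathcal V'$. By definition $|I \cap U'| = 1$ for every $U' \in \mathcal V'$, and since $U' \subseteq U$ we get $|I \cap U| = 1$ for every $U \in \mathcal V$. Moreover, independence of $I$ in $G$ follows from independence in $G'$ because $G'$ is induced: every edge of $G$ whose endpoints lie in $V'$ is also an edge of $G'$. Hence $I$ is the desired IT of $G$ with respect to $\mathcal V$.

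There is no real obstacle beyond setting up the reduction; all the algorithmic difficulty is already carried by Theorem \ref{main}, and the restriction step is essentially free since we simply discard the surplus vertices in each block. The only care needed is to check that taking an induced subgraph can only decrease $\Delta$, which is immediate, and that the ceiling in the choice of $b$ still satisfies $b \geq (2+\epsilon)\Delta(G)$, which holds by construction.
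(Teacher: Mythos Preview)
Your reduction is correct and is exactly the natural way to pass from Theorem~\ref{main} to the corollary: the paper simply states that Corollary~\ref{maxdegalg2} follows ``if we disregard vertex weights'' without spelling out the truncation step, and your argument is the obvious one-line instantiation of that remark. All the checks (induced subgraph preserves independence, $\Delta$ can only drop, $|I\cap U|=|I\cap U'|=1$) are in order.
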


In particular, Corollary~\ref{maxdegalg2} has Corollary~\ref{maxdegalg} as a special case (for constant $\Delta$, we can take $\epsilon = \tfrac{1}{ 2 \Delta}$), and is also stronger than all the previous LLL-based results (since we can also take $\epsilon$ to be an arbitrary fixed constant and allow $\Delta$ to vary freely).

The overall construction has three phases. In the first phase, in Section~\ref{FWPIT}, we develop an algorithm $\FWPIT$ which is an initial attempt to achieve Theorem~\ref{main}. This uses a streamlined and algorithmic version of a construction of Aharoni, Berger, and Ziv \cite{Aharoni2007}, overcoming some technical challenges in the analysis stemming from the fact that constellations provided by $\FITBD$ are slightly ``defective'' and only approximately dominate parts of the graph $G$, as compared to the non-constructive combinatorial bounds. On its own, this algorithm has two severe limitations:  while its runtime is polynomial in $n$, it is exponential in both the blocksize $b$ and the number of bits of precision used to specify the weight function $w$.

In the second phase, discussed in Section~\ref{provemain}, we use an algorithmic version of the LLL from Moser \& Tardos \cite{moser-tardos} to sparsify the graph. Given a vertex partition with blocksize $b \geq (2+\epsilon) \Delta$, where $\epsilon$ is an arbitrary constant, this effectively reduces the blocksize $b$ and the degree $\Delta$ to constant values, at which point  $\FWPIT$ can be used.  Unfortunately, a number of error terms accumulate in this process, including concentration losses from the degree reduction and quantization errors  from the $\FWPIT$ algorithm. As a consequence, this only gives an IT of weight $(1-\lambda) w(G)/b$, where $\lambda$ is an arbitrarily small constant.

In the third phase, carried out in Section~\ref{sec:LP}, we overcome this limitation by ``oversampling'' the high-weight vertices, giving the final result of Theorem~\ref{main}. For maximum generality, we analyse this in terms of a linear programming (LP) formulation related to  a construction of \cite{Aharoni2007}. 

In Section~\ref{sec:restrict}, we  demonstrate the use of Theorem~\ref{main} by providing algorithms for finding ITs which avoid a given set of vertices $L$ as long as $|L| < b^{\text{min}}$.  Such ITs are used in a number of constructions \cite{Rabern2013, Lo2019}, many of which do not themselves overtly involve the use of weighted ITs.

In Section~\ref{SC}, we consider strong colouring of graphs. For a positive integer $k$, we say that a graph $G$ is {\it strongly $k$-colourable with respect to vertex partition $\mathcal V$} if there is a proper vertex colouring of $G$ with $k$ colours so that no two vertices in the same block receive the same colour.  The {\it strong chromatic number} of $G$, denoted $s\chi(G)$, is the minimum $k$ such that $G$ is strongly $k$-colourable with respect to every vertex partition of $V(G)$ into blocks of size $k$. 

This notion was introduced independently by Alon~\cite{Alon1988, Alon1992} and Fellows~\cite{Fellows1990} and has been widely studied~\cite{Fleischner1992, Molloy2002, Haxell2004, Axenovich2006, Aharoni2007, Loh2007, Haxell2008, Lo2019}. The best currently-known explicit bound for strong chromatic number in terms of maximum degree  is $s\chi(G)\leq3\Delta(G)-1$, proved in~\cite{Haxell2004}. (See also~\cite{Haxell2008} for an asymptotically better bound.)  It is conjectured (see e.g.~\cite{Aharoni2007,Szabo2006}) that the correct general bound is $s\chi(G)\leq2\Delta(G)$. There is a natural notion of {\it fractional} strong chromatic number (see Section~\ref{SC}), for which the corresponding fractional version of this conjecture was shown in~\cite{Aharoni2007}. 

Graf \& Haxell \cite{Graf2018} used $\FITBD$ to develop an algorithm for strong colouring with $3\Delta+1$ colours, but, as before, the
algorithm is efficient only when $\Delta$ is constant. In Section~\ref{SC} we use Theorem~\ref{main} for an efficient algorithm for strong
colouring with $(3+\epsilon)\Delta$ colours for any fixed $\epsilon$, with no restrictions on $\Delta$.  We also give an
algorithmic version of the fractional strong colouring result.

We remark that if our goal was solely to show Corollary~\ref{maxdegalg2}, then the first and third phase of the proof of Theorem~\ref{main} could be completely omitted, and the second phase could use a simpler version of the LLL. However, Corollary~\ref{maxdegalg2} is not enough for our applications such as strong colouring.

We also remark that the sparsification step (Phase 2) is the only part of the overall algorithm that requires randomization. It is possible to derandomize the Moser-Tardos algorithm in this setting, giving fully deterministic algorithms for weighted ITs. This requires significant technical analysis of the Moser-Tardos algorithm beyond the scope of this paper; see \cite{harris2} for further details.

\section{FindWeightIT} \label{FWPIT}
Our starting point is a procedure $\FWPIT$ to  find a weighted IT using  $\FITBD$ as a subroutine. This takes as input a graph $G$ with a $b$-regular vertex partition $\mathcal V$ and a weight function $w$ on $G$.  It is defined as follows:
\begin{algorithmic}[1]
\Function{FindWeightIT}{$G, \mathcal V, w$}
	\State Set $W := \{v \in V(G): w(v) =  w^{\text{max}}(\mathcal V(v)) \}$ and $\mathcal W  := \{ U \cap W : U \in \mathcal V \}$.  \label{Wi}
	\State Apply $\FITBD$ to graph $G[W]$,  vertex partition $\cW$ and parameter $\epsilon = \frac{1}{8 b^2}$. \label{ffb}
	\If{$\FITBD(G[W],\cW, \epsilon)$ returns an IT $M'$} \label{ifIT}
		\State \Return $M:=M'$ \label{ifITreturn}
	\Else{ $\FITBD(G[W],\cW, \epsilon)$ returns $\cB \subseteq \cW$ and $D \subseteq W$ containing constellation $K$} \label{ifBD}
		\ForAll {$v\in V(G)$} $w'(v):=w(v)-|N(v)\cap D|$ \label{w'}
		\EndFor
		\State Recursively call  $M := \FWPIT (G,\mathcal V,  w')$. \label{recur}
		\State Set $Y:=\{v\in K^{\text{leaf}} \cap V(\cB):  |N(v)\cap D|=1\}$. \label{Y}
		\While{there is some vertex $v \in Y \setminus M$ with $N(v) \cap M = \emptyset$} \label{whilevloop}
		\State Choose such a vertex $v$ arbitrarily. \label{vY'}
		\State Update $M \leftarrow (M\cup \{v\})\setminus (M \cap \mathcal V(v))$. \label{ifBDreturn}
		\EndWhile
\State \Return $M$ \label{return}
\EndIf
\EndFunction
\end{algorithmic}

For a weight function $w: V(G) \to \mathbb R$ on $G$, we define  $|w|=\sum_{v \in V(G)} |w(v)|$, where $|w(v)|$ is the absolute value of $w(v)$. The main result we will show for this algorithm is the following:

\begin{theorem}\label{weight}
For an integer-valued weight function $w$ and a $b$-regular vertex partition $\mathcal V$ with $b > 2 \Delta(G)$, the algorithm $\FWPIT( G, \mathcal V, w)$ returns an IT in $G$ of weight at least $w(G)/b$. For fixed $b$, its runtime is $\poly(n,|w| )$.
\end{theorem}

Before we prove Theorem~\ref{weight}, we note that a simple quantization step can extend $\FWPIT$ to handle real-valued weight functions, with a small loss in the weight of the resulting IT.
\begin{lemma}\label{lem}
  There is an algorithm that takes as inputs a parameter $\eta>0$, a graph $G$ with a non-negative weight function $w$ and a $b$-regular vertex partition where $b > 2 \Delta(G)$, and finds an IT in $G$ of weight at least $(1-\eta) \tfrac{w(G)}{b}$. For fixed $b$, the runtime is $\poly(n, 1/\eta)$. 
\end{lemma}
\begin{proof}[Proof (assuming Theorem~\ref{weight})]
If $w(G) = 0$, then $w(v) = 0$ for all vertices $v$ and so $w$ is integral; in this case we can apply Theorem~\ref{weight} directly. So suppose that $w(G) > 0$. Define $\alpha=\frac{n}{\eta w(G)}$, and define a new weight function $w' : V(G) \rightarrow \mathbb Z_{\geq 0}$ by $w'(v)=\lfloor \alpha w(v) \rfloor$ for each $v$.  Apply Theorem~\ref{weight} to $G$ and $w'$; since $|w'| \leq \alpha w(G) = n/\eta$,  the runtime is $\poly(n,1/\eta)$ for fixed $b$. This generates an IT $M$ of $G$ with $w'(M) \geq w'(G)/b$. Here
$$
w'(G) =\sum\limits_{v\in V(G)} \lfloor \alpha w(v) \rfloor > \sum\limits_{v\in V(G)}(\alpha w(v)-1) = \alpha w(G) - n.
$$

Therefore
\[
\sum_{v\in M}w(v)\geq \sum_{v\in M} \frac{w'(v)}{\alpha} =\frac{w'(M)}{\alpha} \geq \frac{w'(G)}{b\alpha} > \frac{w(G)}{b} - \frac{n}{b \alpha} = \frac{w(G)}{b} (1 - \eta). \qedhere
  \]
\end{proof}

In the remainder of the section, we will prove Theorem~\ref{weight}. 

When we call $\FWPIT( G, \mathcal V, w)$, it generates a series of recursive calls on the same graph $G$ and vertex partition $\mathcal V$, but different weight functions $w^{(i)}$, where $w^{(0)} = w$ and $w^{(i+1)}$ is obtained from $w^{(i)}$ according to Line~\ref{w'}, i.e. $w^{(i+1)} = (w^{(i)})'$. To simplify notation, we assume throughout that  we have fixed a graph $G$ and a $b$-regular vertex partition $\mathcal V$ where $b > 2 \Delta$. 

\begin{proposition}\label{index}
In line~\ref{w'} we have $w'(G) > w(G) -  (b-1) (1 + \tfrac{1}{16 b^2}) (|\cB| - 1)$.
\end{proposition}
\begin{proof}
In order to reach line~\ref{w'}, $\FITBD(G[W];\cW)$ must return a non-empty set $\cB\subseteq \mathcal W$ of blocks and vertex set $D$ dominating  $V(\cB)$ with $|D| < (2 + \epsilon)  (|\cB| - 1)$. We can compute $w'(G)$ as:
$$
w'(G) = \sum_{v \in V} w'(v) = \sum_{v \in V} \bigl( w(v) - |N(v) \cap D|  \bigr) = w(G) - \sum_{v \in V} |N(v) \cap D|.
$$

In turn, we bound this as $$
\sum_{v \in V} |N(v) \cap D| = \sum_{x \in D} |N(x)| \leq |D| \Delta < (2+\epsilon)(|\cB| - 1) \Delta.
$$

Since $\epsilon = \frac{1}{8 b^2}$ and $b \geq 2 \Delta + 1$, we thus have
\[
\sum_{v \in V} |N(v) \cap D| < (2 + \tfrac{1}{8 b^2}) ( |\cB| - 1)  ( \tfrac{b-1}{2} )  =  (b-1) (1 + \tfrac{1}{16 b^2}) ( |\cB| - 1).  \qedhere
\]
\end{proof}

\begin{lemma}\label{time}
For fixed $b$,  $\FWPIT$ terminates in time $\poly(n,|w|)$.
\end{lemma}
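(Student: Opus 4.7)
The plan is to control the recursion depth via the potential function $\Phi$, and to argue that each recursive call does only polynomial work in $n$.

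An easy induction shows that every weight function encountered during the recursion is integer-valued: the update on line~\ref{w'} sets $w'(v)=w(v)-|N(v)\cap D|$, which preserves integrality. Hence $\Phi$ is always a non-negative integer at every level, and Proposition~\ref{index} gives $\Phi(w')\le\Phi(w)-b$ whenever the algorithm recurses. Since $\Phi(w)=b\sum_{U\in\mathcal V}w^{\max}(U)-w(G)$, and since $\sum_{U\in\mathcal V}w^{\max}(U)\le\sum_{U\in\mathcal V}\max_{u\in U}|w(u)|\le|w|$ while $|w(G)|\le|w|$, we get $\Phi(w)\le(b+1)|w|$. Consequently the recursion forms a single chain of depth at most $\Phi(w)/b=O(|w|)$ for fixed $b$, and in particular must eventually reach the IT branch on line~\ref{ifITreturn}.

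Next I would bound the work performed in a single invocation, excluding the recursive call. The call to $\FITBD$ on line~\ref{ffb} is on the subgraph $G[W]\subseteq G$ with $\Delta(G[W])\le\Delta(G)\le(b-1)/2$ and parameter $\epsilon=1/(8b^2)$; both are constants for fixed $b$, so Theorem~\ref{FITBD} yields a $\poly(n)$ bound. Computing $W$, $\cW$, and $w'$ is clearly polynomial. For the while loop on line~\ref{whilevloop}, the key observation is that once some $v\in Y$ is inserted into $M$ it is never subsequently removed: the only deletions in line~\ref{ifBDreturn} drop members of $\mathcal V(v')\cap M$ for a newly chosen $v'\in Y$, but since $Y\subseteq K^{\text{leaf}}$ and $K^{\text{leaf}}$ is a PIT of $G[V_\cB]$ with respect to $\cB$, any other $v'\in Y$ lies in a different block from $v$. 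Thus the loop iterates at most $|Y|\le n$ times, each iteration being $\poly(n)$.

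Multiplying, the total runtime is $O(|w|)\cdot\poly(n)=\poly(n,|w|)$ for fixed $b$. The only subtle point is checking that the while loop makes monotone progress toward termination, which follows directly from the partial-transversal property of $K^{\text{leaf}}$; everything else is a direct combination of Proposition~\ref{index} with the polynomial guarantees of $\FITBD$ in the constant-degree, constant-$\epsilon$ regime.
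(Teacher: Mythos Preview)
Your proof is correct and follows essentially the same route as the paper: bound the recursion depth by $\Phi(w)/b=O(|w|)$ via Proposition~\ref{index}, invoke the $\poly(n)$ guarantee of $\FITBD$ for constant $\Delta$ and $\epsilon$, and terminate the while loop by observing that $Y$ is a PIT so $|Y\cap M|$ strictly increases. The only point the paper makes explicit that you leave implicit is that the intermediate weight values stay $\poly(n,|w|)$ in magnitude (so that arithmetic on them is cheap); since each recursion subtracts at most $\Delta\le b$ from every entry and the depth is $O(|w|)$, this is immediate.
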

\begin{proof}
We will show algorithm termination using a potential function $\Phi$ on weights $w$, defined as $$
\Phi(w) = -w(G) + b \sum_{U \in \mathcal V} w^{\max}(U) = \sum_{v \in V} (w^{\max}( \mathcal V(v)) - w(v)) \geq 0
$$

In each iteration where $\FWPIT$ reaches line 8, we have
$$
\Phi(w) - \Phi(w') = b \sum_{U \in \mathcal V} (w^{\max}(U) - w'^{\max}(U)) - (w(G) - w'(G)).
$$

As $D$ dominates $V(\cB)$, each vertex $v\in V(\cB)$ has $w'(v)\le w(v)-1$. By definition of $W$, this implies that 
$ {w'}^{\text{max}}(U) < w^{\text{max}}(U)$ for $U \in \cB$. So $\sum_{U \in \mathcal V} (w^{\max}(U) - w'^{\max}(U)) \geq |\cB|$. Combined with Proposition~\ref{index}, this shows 
$$
\Phi(w) - \Phi(w') \geq b |\cB| - (b-1) (1 + \tfrac{1}{16 b^2}) ( |\cB| - 1)  = b +  \bigl( 1 - \frac{b-1}{16 b^2} \bigr) \bigl( |\cB| - 1 \bigr) \geq b.
$$

Thus, $\Phi( w^{(i+1)} ) \leq \Phi( w^{(i)} ) - b$ in each iteration $i$. Since $\Phi( w^{(i)} ) \geq 0$ always, this implies that the total number of recursive calls starting from $w$ is at most $\Phi(w)/b \leq 2 |w|$. 

Next, let us check that each subproblem on weight function $w^{(i)}$ runs in $\poly(n, |w|)$ time.  The entries of $w^{(i)}$ are changed from $w = w^{(0)}$ by at most $\Delta i$, and so arithmetic operations take $\poly(n, |w|)$ time. $\FITBD$ runs in $\poly(n)$ time since $\Delta < b/2$ and $\epsilon = \frac{1}{8 b^2}$ and $b$ is fixed.  Finally, each execution of line~\ref{ifBDreturn} moves a vertex of $Y$ into $M$, and the vertex that gets removed from that block was not in $Y$ because $Y$ is a PIT and so the block contains at most one element of $Y$. Thus each iteration of the loop increases $|Y\cap M|$ by one, so it terminates within $n$ iterations.
\end{proof}

\begin{lemma}\label{PIT}
The set $M$ returned by $\FWPIT$ is an IT of $G$ with respect to $\mathcal V$.
\end{lemma}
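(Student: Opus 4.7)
My plan is to proceed by strong induction on the potential $\Phi(w)$. By Proposition~\ref{index}, each recursive call at line~\ref{recur} strictly decreases $\Phi$ by at least $b$, and $\Phi$ remains a non-negative integer throughout the execution, so the recursion is well-founded and the induction is valid.

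For the base case, when $\FITBD(G[W];\cW)$ returns an IT $M'$ at line~\ref{ifIT}, the algorithm sets $M := M'$. I would first check that $\cW = \{U \cap W : U \in \mathcal V\}$ really has one non-empty block per element of $\mathcal V$: this is immediate from the definition of $W$, since every block $U \in \mathcal V$ contains at least one vertex attaining $w^{\max}(U)$. Hence $M'$ picks exactly one vertex from each $U \cap W \subseteq U$, and since $M'$ is independent in $G[W]$ it is also independent in $G$. Thus $M$ is an IT of $G$ with respect to $\mathcal V$.

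For the inductive step (the line~\ref{ifBD} branch), the recursive call returns an IT $M$ of $G$ with respect to $\mathcal V$ by the inductive hypothesis, and it remains to show that each iteration of the while loop preserves the IT property. Suppose $M$ is an IT just before line~\ref{ifBDreturn}, and let $v \in Y \setminus M$ be the chosen vertex, which by the loop guard satisfies $N(v) \cap M = \emptyset$. Because $M$ is an IT there is a unique $x$ with $M \cap \mathcal V(v) = \{x\}$, and $v \neq x$ since $v \notin M$. The update produces $M' = (M \setminus \{x\}) \cup \{v\}$, which still meets $\mathcal V(v)$ exactly once (at $v$) and leaves every other block unchanged, so $M'$ is a transversal. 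For independence, the only new vertex is $v$, and $v$ has no neighbour in $M \supseteq M \setminus \{x\}$, so $M'$ is independent. Hence $M'$ is an IT.

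The proof requires essentially no combinatorial insight beyond the guard conditions built into the while loop; the main things to verify are bookkeeping items (every block of $\mathcal V$ meets $W$, so $\cW$ is a partition of the expected size; $\mathcal V(v)$ is a well-defined block of $\mathcal V$ for every $v \in Y$; and the recursion terminates so induction is legal). The specific structure of the constellation $K$ and the construction of $w'$ play no direct role here; they enter only indirectly through Proposition~\ref{index}, which is used solely to justify termination of the induction.
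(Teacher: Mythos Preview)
Your proof is correct and follows essentially the same approach as the paper: strong induction on $\Phi(w)$, with the IT returned by $\FITBD$ handled directly and the loop at line~\ref{whilevloop} handled by observing that each swap preserves both the transversal property (same block) and independence (the guard $N(v)\cap M=\emptyset$). Your version is slightly more careful in explicitly verifying that every block meets $W$, so that $\cW$ is indeed a vertex partition in bijection with $\mathcal V$; the paper leaves this implicit.
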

\begin{proof}
We show this by strong induction on the runtime of $\FWPIT$. If $\FITBD(G[W],\cW, \epsilon)$ returns an IT, then $M$ is defined in line~\ref{ifITreturn} to be this same IT. Since $G[W]$ is an induced subgraph and $\cW$ is the restriction of $\cV$ to $W$, this is also an IT of $G$ with respect to $\cV$. 

Otherwise, when $\FITBD(G[W],\cW,\epsilon)$ returns a set $\cB$ of blocks and a set $D$ of vertices, $\FWPIT$ is recursively applied to obtain a set $M$ (line~\ref{recur}). The runtime on this recursive subproblem is clearly less than the runtime of the overall algorithm itself. So by the induction hypothesis, $M$ is an IT with respect to $\mathcal V$.
    
The set $M$ remains a transversal throughout the loop at line~\ref{whilevloop} because line~\ref{ifBDreturn} adds a vertex $v\in Y\setminus M$ to $M$ and removes the vertex of $M$ in the same block as $v$. Also, $M$ remains an independent set since $N(v)\cap M=\emptyset$. Thus $M$ at the end is an IT of $G$ with respect to $\mathcal V$.
\end{proof}

Thus $\FWPIT$ terminates quickly and returns an IT. It remains to show that the resulting IT $M$ has high weight. We first show a few preliminary results.

\begin{proposition}\label{decwprop}
The value of $w'(M)$ does not decrease during any iteration of the loop at line~\ref{whilevloop}.
\end{proposition}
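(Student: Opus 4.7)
The plan is to track how $w'(M)$ changes across a single iteration of the while loop. Because the update at line~\ref{ifBDreturn} swaps two vertices of the same block, and $v$ is chosen with $N(v)\cap M=\emptyset$, the set $M$ remains an IT throughout the loop (this is also recorded in Lemma~\ref{PIT}). In particular, $M\cap \mathcal V(v)$ contains a unique vertex $x$, and the net effect on $w'(M)$ of a single iteration is exactly $w'(v)-w'(x)$. Thus the whole proposition reduces to proving $w'(v)\ge w'(x)$.

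The natural way forward is to evaluate $w'(v)$ directly from the definition of $Y$ at line~\ref{Y}: since $v\in W_{\cB}\subseteq W$, one has $w(v)=w^{\max}(\mathcal V(v))$, and since $|N(v)\cap D|=1$, one gets $w'(v)=w(v)-1$. Then I would case-split on whether $x\in W$. When $x\in W$, it inherits membership in $W_{\cB}$ through the common block (using that $\mathcal V(v)\cap W$ lies in $\cB$ because $v\in W_{\cB}$), and then $D$ dominates $W_{\cB}$ by the guarantee on the output of $\FITBD$ in Theorem~\ref{FITBD}, forcing $|N(x)\cap D|\ge 1$; together with $w(x)=w(v)$ (both being the block maximum), this closes the inequality. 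When $x\notin W$, the definition of $W$ yields $w(x)<w^{\max}(\mathcal V(x))=w(v)$.

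The only subtlety, and the main obstacle, is the second case: the strict inequality $w(x)<w(v)$ must be upgraded to a gap of at least $1$ in order to offset the $-1$ built into $w'(v)=w(v)-1$. I would handle this by invoking the integer-valuedness of $w$, which is assumed in Theorem~\ref{weight} and is preserved under each recursive update $w'(u):=w(u)-|N(u)\cap D|$. This lets us conclude $w(x)\le w(v)-1$, and then $w'(x)=w(x)-|N(x)\cap D|\le w(v)-1=w'(v)$ regardless of $|N(x)\cap D|\ge 0$. This is also the first moment where the integrality hypothesis is genuinely used, foreshadowing the pseudo-polynomial dependence on $|w|$ in the runtime analysis.
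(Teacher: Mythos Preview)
Your proof is correct and follows essentially the same approach as the paper: compute $w'(v)=w^{\max}(\mathcal V(v))-1$ from the definition of $Y$, then case-split on whether the displaced vertex $x$ lies in $W$, using domination of $W_{\cB}$ by $D$ in the first case and integrality of $w$ in the second. One small remark: your aside that this is ``the first moment where the integrality hypothesis is genuinely used'' is not quite accurate, since Proposition~\ref{index} already relies on it (the step from $w'^{\max}(U)<w^{\max}(U)$ to a gap of at least $1$ needs integer weights for vertices outside $W$); this does not affect the proof itself.
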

\begin{proof}
  Let $v$ be the vertex chosen in line~\ref{vY'}, let $U  = \mathcal V(v)$ be the block containing $v$ and let $a = w ^{\text{max}} (U)$. By the definition of $Y$ (line~\ref{Y}), we know that $v \in W$ and that $v$ has exactly one neighbour in $D$. Thus $w(v)= a$ and $w'(v)=w(v)-1= a - 1$. 

Line~\ref{ifBDreturn} updates the transversal $M$ by adding $v$  and removing the vertex $x$ currently in $M \cap U$.  If $x \in W$, then, since $U \cap W \in \mathcal B$ and $D$ dominates $V(\cB)$, this means that $x$ has at least one neighbour in $D$, which implies that $w'(x) \leq w(x) - 1 \leq a - 1$.   Otherwise, if $x \notin W$, then $w(x) < a$. Since $w$ is integer-valued, this implies that $w(x) \leq a - 1$, and so $w'(x) \leq w(x) \leq a - 1$.  
 
 In either case, $w'(x) \leq w'(v)$, and so replacing $x$ by $v$ in $M$ does not decrease $w'(M)$.
\end{proof}

\begin{proposition}\label{yprop}
If $\FWPIT$ reaches line~\ref{ifBD}, then the output $M$ of $\FWPIT$ satisfies 
$$ \sum\limits_{v \in M} |N(v) \cap D| > \left(1-\tfrac{1}{16 b}\right)(|\cB|-1).$$
\end{proposition}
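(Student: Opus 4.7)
The plan is to prove the bound by combining two estimates: $\sum_{v \in M}|N(v)\cap D| \geq |Y|$ and $|Y| > (1 - \tfrac{1}{16b})(|\cB| - 1)$.

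For the first estimate, I would rewrite the sum as the cardinality of $P = \{(v, d) \in M \times D : vd \in E(G)\}$ and construct an injection $\phi: Y \to P$. Observe that $Y \subseteq K^{\text{leaf}} \subseteq V(K) \subseteq D$, and for each $y \in Y$ the constellation definition forces $y$'s unique neighbour in $V(K)$ to be its adjacent centre; since $|N(y) \cap D| = 1$, this centre is also the unique $D$-neighbour of $y$, so we may call it $z(y) \in K^{\text{centre}}$. Now set
\[
\phi(y) = \begin{cases} (y, z(y)) & \text{if } y \in M, \\ (u(y), y) & \text{if } y \notin M, \end{cases}
\]
where $u(y)$ is any vertex of $N(y) \cap M$, which is nonempty by termination of the while loop at line~\ref{whilevloop}. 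Within each case, injectivity is automatic from distinct first or second coordinates. A collision across cases would require $y_1 \in Y \cap M$ and $y_2 \in Y \setminus M$ with $z(y_1) = y_2$; but $z(y_1) \in K^{\text{centre}}$ and $y_2 \in K^{\text{leaf}}$, and these sets are disjoint in a constellation. Hence $|P| \geq |Y|$.

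For the second estimate, I would use the constellation $K$ for some $\cB_0 \supseteq \cB$ guaranteed by Theorem~\ref{FITBD}. Since $K^{\text{leaf}}$ contains at most one vertex of each block of $\cB_0$ and $|K^{\text{leaf}}| = |\cB_0| - 1$, at most $|\cB_0| - |\cB|$ leaves lie outside $W_\cB$, so $|K^{\text{leaf}} \cap W_\cB| \geq |\cB| - 1$. For $v \in K^{\text{leaf}}$ the constellation structure gives $|N(v) \cap V(K)| = 1$, so $|N(v) \cap D| \geq 2$ exactly when $v$ has a neighbour in $D \setminus V(K)$. Double-counting such pairs bounds the number of such $v$ by $\Delta \cdot |D \setminus V(K)| < \epsilon \Delta (|\cB| - 1)$. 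Plugging in $\epsilon = \tfrac{1}{8 b^2}$ and $\Delta < b/2$ yields $\epsilon \Delta < \tfrac{1}{16b}$, so $|Y| > (1 - \tfrac{1}{16b})(|\cB| - 1)$.

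The main subtlety will be the injectivity of $\phi$, where the disjointness of $K^{\text{centre}}$ and $K^{\text{leaf}}$ plays an essential role in ruling out cross-case collisions; the remaining counting steps follow routinely from the approximate-dominating structure guaranteed by Theorem~\ref{FITBD}.
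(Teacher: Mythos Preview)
Your proof is correct and follows essentially the same two-step structure as the paper: first showing $\sum_{v\in M}|N(v)\cap D|\ge |Y|$, then bounding $|Y|$ from below using the constellation structure and the bound $|D\setminus V(K)|<\epsilon(|\cB|-1)$. Your injection $\phi:Y\to P$ is a clean repackaging of the paper's split into $M\cap Y$ and $M\setminus Y$ (and in fact avoids the paper's auxiliary observation that the $M$-neighbour of each $y\in Y\setminus M$ lies in $M\setminus Y$); the second estimate is identical to the paper's argument up to notation.
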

\begin{proof}
Because of the termination condition of the loop at line~\ref{whilevloop}, each
vertex $v \in Y\setminus M$ has a neighbor $u$ in $M$. Since $Y$ is a PIT, we know that $u \in M \setminus Y$. So there are at least $|Y \setminus M|$ edges from $M \setminus Y$ to $Y \setminus M$. Since $Y \subseteq D$, this in turn shows that $\sum_{v \in M \setminus Y} |N(v) \cap D| \geq |Y \setminus M|$. Also,  since $D$ dominates $V(\cB)$ and $Y\subseteq V(\cB)$, any vertex $v \in M \cap Y$ has $|N(v) \cap D| \geq 1$, and hence $\sum_{v \in M \cap Y} |N(v) \cap D| \geq  |M \cap Y|$. Putting these bounds together, we have
$$\sum\limits_{v \in M} |N(v) \cap D|=\sum\limits_{v \in M \setminus Y} |N(v) \cap D| + \sum\limits_{v \in M \cap Y} |N(v) \cap D| \geq |Y \setminus M| + |Y \cap M| = |Y|.$$

To complete the proof, we will show that $|Y| > \left(1 - \tfrac{1}{16 b}\right) (|\cB| - 1)$. To see this, note that
$$|Y| \geq |K^{\text{leaf}}| - | K^{\text{leaf}} \setminus V(\cB) | - | \{ y \in K^{\text{leaf}}  : |N(y) \cap D| \neq 1 \} | .
$$

Here $K$ is a constellation for some $\cB_0 \supseteq \cB$ and thus $|K^{\text{leaf}}| = | \cB_0 |- 1$. Since $K^{\text{leaf}}$ is a PIT, we have $|K^{\text{leaf}} \setminus V(\cB)| \leq |\cB_0| - |\cB|$. Each vertex in $K^{\text{leaf}}$ has zero neighbours in $K^{\text{leaf}}$ and exactly one neighbour in $K^{\text{centre}}$. Thus, if $y\in K^{\text{leaf}}$ has more than one neighbour in $D$, then it has a neighbour in $J = D \setminus V(K)$. (Recall that $V(K) \subseteq D$.) Theorem~\ref{FITBD} ensures that $|J| < \epsilon(|\cB|-1)$ so there are fewer than $\epsilon \Delta (|\cB| - 1)$ edges from $J$ to $K^{\text{leaf}}$ and hence fewer than $\epsilon \Delta (|\cB| - 1)$ vertices $y \in K^{\text{leaf}}$ with $|N(y) \cap D| \neq 1$.  As $\epsilon = \tfrac{1}{8 b^2}$ and $\Delta < b/2$, we have
\[  |Y| > ( | \cB_0| - 1 ) - ( |\cB_0| - |\cB|) - (\tfrac{1}{8 b^2}) (b/2) ( | \cB| - 1) = (1 - \tfrac{1}{16 b}) (|\cB| - 1). \qedhere  \]
\end{proof}

We are now ready to prove that the IT returned by $\FWPIT$ has the desired weight.

\begin{lemma}\label{weightM}
For $M = \FWPIT( G, \mathcal V, w)$, we have $w(M)\ge w(G) / b$.
\end{lemma}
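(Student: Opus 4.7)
The plan is to prove this by strong induction on $\Phi(w)$, treating the two branches of $\FWPIT$ separately.

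For the base case, suppose $\FITBD(G[W];\cW)$ returns an IT $M' = M$ in line~\ref{ifITreturn}. Since $M \subseteq W$, every $v \in M$ is a maximum-weight vertex in its block, so
\[
w(M) = \sum_{U \in \mathcal V} w^{\max}(U) \geq \sum_{U \in \mathcal V} \frac{w(U)}{b} = \frac{w(G)}{b},
\]
where the inequality uses $|U| = b$.

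For the inductive case, let $M''$ denote the IT produced by the recursive call in line~\ref{recur}. By Proposition~\ref{index} we have $\Phi(w') < \Phi(w)$, so the induction hypothesis gives $w'(M'') \geq w'(G)/b$. Proposition~\ref{decwprop} then guarantees that $w'(M) \geq w'(M'') \geq w'(G)/b$ for the final IT $M$. Since $w(v) - w'(v) = |N(v) \cap D|$ for every vertex, a direct calculation gives
\[
w(M) - \frac{w(G)}{b} = \Bigl( w'(M) - \frac{w'(G)}{b} \Bigr) + \sum_{v \in M} |N(v) \cap D| - \frac{1}{b} \sum_{v \in V} |N(v) \cap D|,
\]
and the first bracketed term is nonnegative, so it suffices to show the remaining two terms together are nonnegative.

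For the last term, I would rewrite $\sum_{v \in V} |N(v) \cap D| = \sum_{x \in D} \deg(x) \leq |D|\Delta < (2+\epsilon)(|\cB|-1)\Delta$ using Theorem~\ref{FITBD}. For the middle term, Proposition~\ref{yprop} gives $\sum_{v \in M}|N(v) \cap D| > (1 - \tfrac{1}{16b})(|\cB|-1)$. Substituting $\epsilon = \tfrac{1}{8b^2}$ and $\Delta \leq (b-1)/2$, the desired inequality reduces to verifying
\[
1 - \frac{1}{16b} \geq \Bigl(2 + \frac{1}{8b^2}\Bigr)\frac{\Delta}{b},
\]
which is a routine numerical check (the right-hand side is bounded by $(1 - 1/b) + O(1/b^2)$, comfortably below $1 - 1/(16b)$ for $b \geq 2$). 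The main obstacle is organising this arithmetic cleanly so that the extra error factor $\tfrac{1}{16b}$ from the ``defective'' constellation of Proposition~\ref{yprop} is absorbed by the slack between $\Delta$ and $b/2$; once this gap is exploited, everything balances and the induction closes.
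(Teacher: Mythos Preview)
Your proof is correct and follows essentially the same approach as the paper's: strong induction on $\Phi(w)$, the same base case, and in the recursive branch the same three ingredients (induction hypothesis plus Proposition~\ref{decwprop} to get $w'(M)\ge w'(G)/b$, the degree bound $\sum_v |N(v)\cap D|<(2+\epsilon)(|\cB|-1)\Delta$, and Proposition~\ref{yprop} for $\sum_{v\in M}|N(v)\cap D|$). The only difference is cosmetic---you package the computation as a single identity for $w(M)-w(G)/b$, whereas the paper bounds $w'(M)$ and $w(M)-w'(M)$ separately---and your ``routine numerical check'' is exactly the paper's final line, which reduces to $15b^2-b+1\ge 0$.
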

\begin{proof}[Proof:] 
We prove this by strong induction on the runtime of $\FWPIT$.  If $\FITBD$ returns an IT $M'$ on the vertex set $W$, then
$$
w(M') = \sum_{U \in \cV} w^{\max}(U) \geq \sum_{U \in \cV} \frac{1}{b} \sum_{v \in U} w(v) =  w(G)/b,
$$
and we are done.

Otherwise, suppose $\FITBD$ returns $\cB$ and $D$ (i.e. lines~\ref{ifBD}--\ref{ifBDreturn} are executed). By Lemma~\ref{PIT}, the recursive call  $\FWPIT( G, \mathcal V, w')$ returns an IT $M$ at line~\ref{recur}. By the induction hypothesis, it satisfies $w'(M)\ge w'(G)/b$.  By Proposition~\ref{decwprop}, the value $w'(M)$ does not  decrease during the loop at line~\ref{whilevloop}, so the final output $M$ also has $w'(M) \geq w'(G)/b$.  By Proposition~\ref{index}, we have $w'(G) > w(G) - (b-1) (1 + \frac{1}{16 b^2}) ( |\cB| - 1)$, and so
$$w'(M) > \frac{w(G) - (b-1) (1 + \frac{1}{16 b^2}) ( |\cB| - 1)}{b} = \frac{w(G)}{b} - \left(1 + \tfrac{1}{16 b^2}\right)\left(1-\tfrac{1}{b}\right)(|\cB| - 1).$$

By Proposition~\ref{yprop}, we have $w(M) - w'(M) = \sum_{v \in M} |N(v) \cap D| > \left(1 - \tfrac{1}{16 b}\right) (|\cB| - 1)$.  Overall, this gives
\begin{align*}
w(M) &= w'(M) + (w(M) - w'(M)) \\
&> \Bigl( \frac{w(G)}{b }- \left(1 + \tfrac{1}{16 b^2}\right)\left(1 - \tfrac{1}{b}\right) ( |\cB| - 1) \Bigr) + \left(1 - \tfrac{1}{16 b}\right) ( |\cB| - 1)  \\
&= \frac{w(G)}{b} + \frac{15 b^2 - b + 1}{16 b^3} (|\cB| - 1) \geq \frac{w(G)}{b}. \qedhere
\end{align*}
\end{proof}

Theorem~\ref{weight} and Lemma~\ref{lem} now follow from Lemmas~\ref{time},~\ref{PIT}, and~\ref{weightM}.

\section{Degree Reduction}\label{provemain}
The next step in the proof is to remove the condition that $b$ is constant. Our main tool for this is the LLL, in particular the LLL algorithm of Moser and Tardos~\cite{moser-tardos}. The basic idea is to use the LLL for a ``degree-splitting'': we reduce the degree, the blocksize, and the total vertex weight of $G$ by a factor of approximately half. By doing this repeatedly, we scale down the original graph to a graph with constant blocksize. At that point we use $\FWPIT$.

Let us begin by reviewing the algorithm of Moser and Tardos.

\begin{theorem}[\cite{moser-tardos}]\label{mt-thm}
 There is a randomized algorithm which takes as input a probability space $\Omega$ in $k$ independent variables $X_1, ..., X_k$ along with a collection of ``bad'' events $B_1, \dots, B_{\ell}$ in that space, wherein each $B_i$ is a Boolean function of a subset of the variables $\text{var}(B_i)$.

If $e p d \leq 1$, where $p = \max_{i} \Pr_{\Omega}(B_i)$  and $d = \max_{i}  | \{ j : \text{var} (B_j) \cap \text{var}(B_i) \neq \emptyset \} |$, then the algorithm has expected runtime polynomial in $k$ and $\ell$ and outputs a configuration $X = (X_1, \dots, X_k)$ such that all bad-events $B_i$ are false on $X$.
\end{theorem}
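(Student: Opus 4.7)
Since this is a direct quotation of the Moser--Tardos algorithmic Lov\'asz Local Lemma, the plan is to follow the witness-tree argument of \cite{moser-tardos}. The algorithm itself is the natural resampling procedure: first draw each $X_i$ independently from its marginal in $\Omega$; then, while some bad event $B_i$ is true on the current assignment, pick one such $B_i$ (say, the lowest-indexed one) and resample the variables in $\text{var}(B_i)$ from their marginals. Correctness is immediate from the termination condition, so the entire content is to bound the expected number of resampling steps by a polynomial in $k$ and $\ell$; each step is trivially implementable in $\poly(k,\ell)$ time by scanning the bad events.

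I would set up the analysis using an infinite tape of i.i.d.\ fresh samples for each variable $X_i$; the algorithm is reinterpreted as reading down these tapes in order whenever a new value of $X_i$ is required. To each resampling step $t$ at which $B_{i_t}$ is resampled, I would associate a \emph{witness tree} $\tau_t$ constructed by scanning the execution log from step $t-1$ backwards: starting from a root labeled $B_{i_t}$, for each earlier step $s$ I find the deepest node already placed whose event shares a variable with $B_{i_s}$, and attach $B_{i_s}$ as a child of that node (or discard $B_{i_s}$ if no such node exists). By construction, siblings have pairwise disjoint variable sets, every non-root node shares a variable with its parent, and two distinct resampling steps produce two distinct witness trees. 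Call trees with the first two properties \emph{proper}.

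The heart of the argument is the \emph{Witness Tree Lemma}: for any fixed proper witness tree $\tau$, the probability that $\tau$ arises from some step of the algorithm is at most $\prod_{v\in V(\tau)}\Pr_{\Omega}(B_{\text{label}(v)})$. The proof couples the execution with the tape: if $\tau$ really occurred, then traversing $\tau$ in reverse breadth-first order and reading the prescribed cells of the variable tapes gives a sequence of independent samples, each of which must falsify the corresponding labeled bad event, yielding the product bound. Summing over proper witness trees rooted at a fixed $B_i$ is then dominated by the expected size of a Galton--Watson branching process in which each node labeled $B_j$ has, for each of the at most $d$ neighbours $B_{j'}$ of $B_j$ in the dependency graph, an independent child labeled $B_{j'}$ with probability $\Pr_{\Omega}(B_{j'})$; under the hypothesis $epd\le1$ a standard calculation shows this expected size is $O(1)$. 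Since distinct steps yield distinct witness trees, the expected total number of resamplings is $O(\ell)$, hence the expected runtime is $\poly(k,\ell)$.

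The main technical obstacle is the Witness Tree Lemma. The subtlety is that one must argue that the execution, viewed through the lens of the infinite tapes, actually forces the tape entries at the specific cells prescribed by $\tau$ to satisfy the labeled events, and that the positions across different variables can be read off independently so that the product bound is legitimate. Verifying this requires tracking, for each node of $\tau$, exactly which cell of which tape it consumes, and showing that these cells are all distinct; once this bookkeeping is in place, the branching-process estimate and the per-step $\poly(k,\ell)$ cost immediately yield the claimed expected polynomial-time bound.
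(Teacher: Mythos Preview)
Your sketch is a faithful outline of the witness-tree argument from \cite{moser-tardos}, and the identified technical point (distinctness of the tape cells consumed by the nodes of $\tau$, guaranteeing independence for the product bound) is indeed the crux. Note, however, that the paper does not give its own proof of this theorem: it is quoted as a black box from \cite{moser-tardos}, so there is no in-paper argument to compare against---your proposal simply reproduces the proof strategy of the cited source.
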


One additional feature of this algorithm is critical for our application to weighted ITs: the output state $X$ produced by the Moser-Tardos algorithm has a probability distribution with nice properties \cite{hss, harris-srin, Harris2016a}. One result of~\cite{harris-srin}, which we present in a simplified form, is the following.

\begin{theorem}[\cite{harris-srin}]\label{mt-thm2}
Suppose the  conditions of Theorem~\ref{mt-thm} are satisfied. Let $E$ be an event in the probability space $\Omega$ which is a Boolean function of a subset of variables $\Var(E)$, and let $r$ be the number of bad-events $B_i$ with $\Var(E) \cap \Var(B_i) \neq \emptyset$, i.e., $B_i$ can affect $E$. Then the probability that $E$ holds in the output configuration $X$ of the Moser-Tardos algorithm  is at most  $e^{e p r} \Pr_{\Omega}(E)$.
\end{theorem}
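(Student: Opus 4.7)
The plan is to extend the witness tree technique of Moser and Tardos~\cite{moser-tardos} to bound the probability that an arbitrary event $E$ holds on the output distribution of their algorithm. Recall that in the standard MT analysis, each resampling of a bad event $B_i$ during execution is assigned a rooted labelled tree called a \emph{witness tree}, constructed by scanning the execution log in reverse: the root is labelled $B_i$, and each earlier resampling is attached as a child of the deepest tree node whose label's variable set it intersects. The workhorse is the \emph{witness tree lemma}: the probability that a specific tree $\tau$ is produced during execution is at most $\prod_{v \in \tau} \Pr_\Omega(B_{\ell(v)})$, where $\ell(v)$ denotes the label of node $v$.

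First I would define a \emph{witness tree rooted at $E$}: a rooted labelled tree whose root carries the special label $E$, whose depth-one children are labelled by bad events $B_i$ with $\Var(B_i) \cap \Var(E) \neq \emptyset$, and whose subtrees below the root are proper MT witness trees. To associate such a tree with the execution, I would imagine appending a fictitious ``terminal evaluation of $E$'' to the log once the algorithm finishes, then apply the usual log-scanning construction: the depth-one children of the root record the most recent resamplings that touched each variable in $\Var(E)$, and the tree grows downward as usual.

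Next I would prove the analog of the witness tree lemma for $E$-rooted trees: for any proper such $\tau$, the probability that $\tau$ is produced and $E$ simultaneously holds on the output is at most $\Pr_\Omega(E) \prod_{v \neq \text{root}} \Pr_\Omega(B_{\ell(v)})$. The argument mirrors the MT coupling: fix an independent tape of fresh samples for each variable, draw each resampling's variables from the next available tape positions, and note that the node values forced by $\tau$ (together with the terminal values seen by $E$) constitute independent samples from $\Omega$. A union bound over all proper $E$-rooted witness trees then gives
$$\Pr[E \text{ holds on output}] \;\leq\; \Pr_\Omega(E) \sum_{\tau} \prod_{v \neq \text{root}} \Pr_\Omega(B_{\ell(v)}).$$

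Finally I would evaluate this sum via a Galton--Watson branching process argument, as in the standard MT analysis. Under $epd \leq 1$, the subtree sum at each child of the root is bounded by a geometric-type factor close to $1$; multiplying over the $r$ possible root children and simplifying (via $1/(1-ep) \leq e^{2ep}$ or a sharper substitution of LLL parameters) yields the desired bound $e^{epr} \Pr_\Omega(E)$. The main obstacle is making Step~2 rigorous: because $E$ triggers no resampling, one must carefully show that the depth-one children of the root are uniquely determined by the execution trace, and that the tape-coupling argument extends cleanly despite the asymmetric treatment of the root label. Once this setup is in place, the remaining Galton--Watson bookkeeping proceeds as in~\cite{moser-tardos}.
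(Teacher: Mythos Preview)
The paper does not contain a proof of this theorem; it is quoted (in simplified form) from \cite{harris-srin} and used as a black box. So there is no ``paper's own proof'' to compare your proposal against.

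That said, your outline is essentially the argument that appears in \cite{hss} and \cite{harris-srin}: root a witness tree at the auxiliary event $E$, extend the Witness Tree Lemma to show that the probability any fixed $E$-rooted tree $\tau$ arises \emph{and} $E$ holds at termination is at most $\Pr_\Omega(E)\prod_{v\neq\text{root}}\Pr_\Omega(B_{\ell(v)})$, and then sum over all such trees via the Galton--Watson argument. One small caution on the final step: the clean bound one gets from the symmetric LLL parameters is $\Pr_\Omega(E)\,(1-ep)^{-r}$, and the inequality you cite, $1/(1-ep)\le e^{2ep}$, only yields $e^{2epr}\Pr_\Omega(E)$, not $e^{epr}\Pr_\Omega(E)$. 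Getting the constant in the exponent down to $1$ (as stated here) requires the sharper accounting in \cite{harris-srin}; the paper explicitly notes it is presenting a simplified form, so the precise constant is not the point, but you should be aware that your sketch as written overshoots by a factor of $2$ in the exponent.
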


Using the Moser-Tardos algorithm, we get the following degree-splitting algorithm.
\begin{lemma}\label{sparsify-one-round}
  There is a randomized polynomial-time algorithm that takes as input a graph $G$ with a non-negative weight function $w$ and a $b$-regular vertex partition $\mathcal V$ where $b \geq 15000$.  It generates an induced subgraph $G'$ such that   \\
  \noindent (i) every block $U \in \mathcal V$ has exactly $b' := \lceil b / 2 \rceil$ vertices in $G'$, \\
  \noindent (ii)  $\Delta(G') \leq D/2 + 10 \sqrt{D \log D}$ where we define $D = \max\{ b/3, \Delta(G) \}$ \\
  \noindent (iii) $\tfrac{w(G')}{b'} \geq  (1 - 1/b) \tfrac{w(G)}{b}$.
\end{lemma}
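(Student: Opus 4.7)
The plan is to apply the Moser--Tardos algorithmic LLL (Theorem~\ref{mt-thm}) to a block-wise random sampling procedure. For the probability space, introduce one independent variable $X_U$ per block $U \in \mathcal V$, taking values in the uniform distribution over $b'$-subsets of $U$; the output graph is $G' = G[\bigcup_U X_U]$, so (i) is automatic.

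For (ii), I would introduce a bad event $B^{\deg}_v$ for each $v \in V(G)$, namely that $|N(v) \cap V(G')| > \hat\Delta/2 + 10\sqrt{\hat\Delta \log \hat\Delta}$. The random variable $|N(v) \cap V(G')|$ decomposes as $\sum_U |N(v) \cap X_U|$, a sum over blocks of independent hypergeometric contributions with total expectation at most $(b'/b)\Delta(G) \leq \hat\Delta/2$. Chernoff/Bernstein for negatively associated Bernoullis (equivalently, for sums of hypergeometrics) bounds $\Pr[B^{\deg}_v] \leq \hat\Delta^{-C}$ for a large constant $C$ that the leading factor $10$ in the deviation makes available. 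Each $B^{\deg}_v$ depends only on the $X_U$ for blocks meeting the closed neighbourhood of $v$, and two such events clash only if the corresponding closed neighbourhoods share a block; this yields dependency degree at most $O(b\,\Delta(G)^2) = O(b^3)$. With $b \geq 15000$ the LLL condition $epd \leq 1$ is then comfortably satisfied.

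For (iii), add a bad event $B^w_U$ per block, namely $w(X_U) < (b'/b)(1 - b^{-10}) w(U)$. This is the delicate part: if $U$ contains a single vertex carrying most of $w(U)$, uniform sampling makes $\Pr[B^w_U]$ of constant order and the LLL cannot absorb it. My remedy is to modify the random process so that vertices exceeding a per-block weight threshold are forced into $X_U$ deterministically and only the remaining slots are filled uniformly at random from the lighter vertices of $U$. Tuning the threshold so that the maximum light-vertex weight is small relative to $b^{-10}(b'/b) w(U)$ brings Bernstein-type concentration into the range $\Pr[B^w_U] \leq b^{-C'}$ for arbitrarily large $C'$; since $B^w_U$ depends only on the single variable $X_U$, its dependency degree within the combined event system remains $O(b^3)$.

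The main obstacle will be ensuring that the heavy-vertex modification does not break the analysis of (ii). One must verify that the deterministically included heavy vertices do not inflate the expected or actual degree of any $v$ beyond $\hat\Delta/2 + 10\sqrt{\hat\Delta \log \hat\Delta}$; this amounts to showing that each $v$ has only a handful of heavy neighbours, which follows from a global weight-counting bound $|H| \leq w(G)/\tau$ combined with the maximum-degree hypothesis on $G$, absorbed into the slack of the Chernoff tail. Once both classes of bad events satisfy the LLL condition simultaneously, Theorem~\ref{mt-thm} produces, in expected polynomial time, an assignment of the $X_U$'s for which no bad event holds, and the induced subgraph $G'$ has all three required properties.
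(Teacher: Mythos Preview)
Your setup for (i) and (ii) matches the paper, but your plan for (iii) does not work, and the heavy-vertex fix you propose necessarily destroys (ii).

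Take $\Delta(G)=\lfloor b/2\rfloor$, fix a vertex $v$ whose neighbours $u_1,\dots,u_{\Delta(G)}$ lie in distinct blocks $U_1,\dots,U_{\Delta(G)}$, and set $w(u_i)=1$ for each $i$ and $w=0$ on every other vertex. If $u_i$ is \emph{not} deterministically forced into $X_{U_i}$, then with probability $1-b'/b\approx 1/2$ we get $w(X_{U_i})=0$, so $\Pr[B^w_{U_i}]\approx 1/2$ and the LLL condition $epd\le 1$ fails outright. Hence any scheme that makes the events $B^w_{U}$ rare must force every $u_i$ into $G'$. But then $v$ retains degree $\Delta(G)=\lfloor b/2\rfloor$ in $G'$, whereas (ii) demands at most $\hat\Delta/2+10\sqrt{\hat\Delta\log\hat\Delta}\approx b/4+O(\sqrt{b\log b})$. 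The global bound $|H|\le w(G)/\tau$ is an aggregate count and says nothing about the heavy neighbourhood of one fixed vertex; the excess degree here is deterministic, not a tail event that Chernoff slack could absorb.

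The paper avoids weight bad events altogether. It runs Moser-Tardos with only the degree events $B_v$, and then applies Theorem~\ref{mt-thm2} to the output distribution: for each block $U$ and each fixed $b'$-subset $A\subseteq U$ one has $\Pr[X_U=A]\le e^{epr}/\binom{b}{b'}$ with $r\le b\Delta(G)$, giving $\E[w(G)-w(G')]\le(1-b'/b)\,e^{b^{-47}}w(G)$. Markov's inequality then delivers (iii) with probability $\Omega(b^{-11})$, and repeating the entire Moser-Tardos run an expected $O(b^{11})$ times secures it. The missing ingredient is Theorem~\ref{mt-thm2}: property (iii) is obtained in expectation via the MT output-distribution bound and boosted by repetition, not enforced with high probability through an additional LLL bad event.
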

\begin{proof}
We will use Theorem~\ref{mt-thm}, where the probability space $\Omega$ has a variable $X_U$ for each block $U \in \mathcal V$; the distribution of $X_U$ is to select a uniformly random subset $U' \subseteq U$ of size exactly $b'$. We will  set $G'$ to be the induced graph on vertex set $V' = \bigcup_{U \in \mathcal V} X_U$.  This clearly satisfies property (i).  

For each vertex $v$, we have a bad-event $B_v$ that $v$ has more than $s = D/2 + 10 \sqrt{D \log D}$ neighbours in $V'$. If all events $B_v$ are false, then property (ii) will hold.  Note that any variable $X_U$ affects an event $B_v$ only if $N(v) \cap U \neq \emptyset$; so, $X_U$ can affect at most $b \Delta(G)$ events.  

To calculate the parameters $p$ and $d$ of Theorem~\ref{mt-thm}, consider some vertex $v$ with neighbours $y_1, \dots, y_k$. The event $B_v$ is affected by the variable $X_{U_i}$ for each block $U_i = \mathcal V(y_i)$; each $X_{U_i}$ in turn affects at most $b \Delta(G)$ bad-events. In total, $B_v$ affects at most $b \Delta(G)^2 \leq 3 D^3$ bad-events.

We next calculate the probability of $B_v$. The degree of $v$ in $V'$ is the sum $Y = \sum_{j=1}^k  Y_j$, where $Y_j$ is the indicator that $y_j \in V'$. The random variables $Y_j$ are negatively correlated and $Y$ has expectation $\E[Y] \leq \tfrac{k b'}{b}  \leq \tfrac{D+ 1}{2}$.  Hoeffding's inequality applies to sums of negatively correlated random variables (see, e.g., \cite{farcomeni}), giving:
$$
\Pr( Y \geq s) \leq \Pr (Y \geq \E[Y] + (10 \sqrt{D \log D} - 1/2) ) \leq e^{-2 (10 \sqrt{D \log D} - 1/2)^2/k}, 
$$
and for $D \geq 5000$ and $k \leq \Delta(G) \leq D$, this is at most $D^{-100}$.   

Thus $p \leq D^{-100}$ and $d \leq 3 D^3$, and $e p d \leq 1$. The Moser-Tardos algorithm generates a configuration avoiding all bad-events $B_v$, and the  resulting graph $G'$ satisfies (i) and (ii).  It remains to analyse $w(G')$.   

By Theorem~\ref{mt-thm2}, for any block $U$ and fixed $b'$-element set $A \subseteq U$, the probability of  $X_U = A$ in the algorithm output is at most $e^{e p r}$ times its probability in the original probability space $\Omega$, where $r$ is the number of bad-events affected by $X_U$. The original sampling probability is $1/\binom{b}{b'}$ and we have already seen that $r \leq b \Delta(G) \leq 3 D^2$. So 
$$\Pr(X_U = A) \leq \frac{e^{e D^{-100} \cdot 3 D^2}}{\binom{b}{b'}} \leq \frac{e^{D^{-97}}}{\binom{b}{b'}}.
$$

Consider the random variable $L = w(G) - w(G')$. Since $w$ is non-negative, we have:
\begin{align*}
\E[L] &= \sum_{U \in \mathcal V} \sum_{\substack{A \subseteq U \\ |A| = b'}} \Pr( X_U = A ) w(U \setminus A)  \leq \frac{e^{D^{-97} }}{\binom{b}{b'}} \sum_{U \in \mathcal V} \sum_{\substack{A \subseteq U \\ |A| = b'}}  w(U \setminus A) \\
& = \frac{e^{D^{-97}}}{\binom{b}{b'}}  \sum_{U \in \mathcal V} w(U) \tbinom{b-1}{b'}  = (1 - b'/b) e^{D^{-97}}   w(G) .
\end{align*}
By Markov's inequality applied to the non-negative random variable $L$, therefore, the bound
\begin{equation}
\label{l-eqn}
L \leq w(G) (1 - b'/b) (1 + D^{-2})
\end{equation}
holds with probability at least $1 - \frac{e^{D^{-97}}}{1 + D^{-2}} \geq \Omega(D^{-2})$. We can repeatedly call the algorithm until we get a configuration satisfying Eq.~(\ref{l-eqn}). This takes $O(D^{2}) = \poly(n)$ repetitions on average, and each iteration has expected runtime $\poly(n)$. The resulting graph $G'$ then has
 \begin{align*}
w(G') &= w(G) - L \geq w(G) \bigl(1  -  (1 - b'/b) (1 + D^{-2}) \bigr) = w(G) (b'/b) \bigl( 1 + D^{-2} - D^{-2} b / b' \bigr)
\end{align*} 
and thus, since $b' \geq b/2$ and $D \geq b/3 \geq 5000$, we have 
 \[
 w(G')/b' \geq (1 - D^{-2}) w(G)/b  \geq (1-1/b) w(G)/b. \qedhere 
 \]
\end{proof}

\begin{lemma}\label{sparse-multi}
There is a randomized algorithm that takes as input  parameters $\epsilon, \lambda \in (0,1)$, a graph $G$ with a $b$-regular vertex partition where $b \geq (2+\epsilon) \Delta(G)$ and  a  non-negative weight function $w$. It generates an IT with  weight at least $\frac{w(G)}{b} ( 1 - \lambda )$. For fixed $\epsilon$ and $\lambda$, the expected runtime is $\poly(n)$.
\end{lemma}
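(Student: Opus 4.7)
\textbf{Proof plan for Lemma~\ref{sparse-multi}.}  The strategy is to iterate Lemma~\ref{sparsify-one-round} $O(\log n)$ times to drive the blocksize $b$ down to a constant $c = c(\epsilon)$ while approximately preserving the ratio $w(G)/b$ and the degree condition, and then finish with one call to Lemma~\ref{lem}.  Concretely, set $c$ to be a sufficiently large constant depending only on $\epsilon$ (with $c \geq 15000$, so that Lemma~\ref{sparsify-one-round} applies throughout), and set $\eta = \lambda/2$.  If $b < c$ already, we just apply Lemma~\ref{lem} directly; otherwise we repeatedly apply Lemma~\ref{sparsify-one-round} to produce a sequence of induced subgraphs $G = G_0 \supseteq G_1 \supseteq \cdots \supseteq G_t$ with blocksizes $b_i = \lceil b_{i-1}/2 \rceil$, halting at the first $t$ with $b_t < c$; then apply Lemma~\ref{lem} to $G_t$ with parameter $\eta$.

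The main technical point is to show that the degree condition $b_i \geq (2+\epsilon/2)\Delta(G_i)$ is maintained through all rounds, so that the sparsification lemma continues to apply and Lemma~\ref{lem} can be invoked at the end.  Write $\epsilon_i := b_i/\Delta(G_i) - 2$.  I will split into two cases based on the quantity $\hat\Delta_i = \max\{b_i/3,\Delta(G_i)\}$.  When $\hat\Delta_i = b_i/3$ (i.e.\ $\Delta(G_i) \leq b_i/3$), Lemma~\ref{sparsify-one-round}(ii) gives $\Delta(G_{i+1}) \leq b_i/6 + O(\sqrt{b_i \log b_i})$, so $b_{i+1}/\Delta(G_{i+1}) \to 3$ and $\epsilon_{i+1} \geq \epsilon/2$ automatically for $b_i$ large.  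When $\hat\Delta_i = \Delta(G_i)$, Lemma~\ref{sparsify-one-round}(ii) gives $\Delta(G_{i+1}) \leq \Delta(G_i)(1 + O(\sqrt{\log\Delta(G_i)/\Delta(G_i)}))/2$, so
\[
\epsilon_{i+1} \;\geq\; \epsilon_i - (2+\epsilon_i)\cdot O\!\left(\sqrt{\log \Delta(G_i)/\Delta(G_i)}\right).
\]
Since the $\Delta(G_i)$ decrease geometrically and remain at least $\Omega(c)$ throughout, the cumulative loss $\sum_i O(\sqrt{\log\Delta(G_i)/\Delta(G_i)})$ is dominated by its last term and is thus $O(\sqrt{\log c/c})$, which is at most $\epsilon/2$ once $c$ is chosen large enough in terms of $\epsilon$.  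Hence $\epsilon_i \geq \epsilon/2$ for all $i$, which in particular ensures $b_i > 2\Delta(G_i)$ at the final stage so Lemma~\ref{lem} applies.

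For the weight bound, each invocation of Lemma~\ref{sparsify-one-round} guarantees $w(G_{i+1})/b_{i+1} \geq (w(G_i)/b_i)(1 - b_i^{-10})$.  Telescoping gives
\[
\frac{w(G_t)}{b_t} \;\geq\; \frac{w(G)}{b}\prod_{i=0}^{t-1}(1 - b_i^{-10}) \;\geq\; \frac{w(G)}{b}\Bigl(1 - \sum_{i=0}^{t-1} b_i^{-10}\Bigr).
\]
Because $b_i \geq c \geq 15000$ and the $b_i$ are geometrically decreasing (so the sum is bounded by a geometric series with ratio $\sim 2^{-10}$), the total loss is a tiny constant $\ll \lambda/2$ once $c$ is large.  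Applying Lemma~\ref{lem} to $G_t$ with $\eta = \lambda/2$ then produces an IT $M$ of $G_t$, and hence of $G$ (since $G_t$ is an induced subgraph and blocks only shrink), with weight at least $(1-\eta)w(G_t)/b_t \geq (1-\lambda)w(G)/b$.

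Finally, the runtime: the number of rounds is $O(\log(b/c)) = O(\log n)$, each round runs Lemma~\ref{sparsify-one-round} in expected $\poly(n)$ time, and Lemma~\ref{lem} with constant $b_t$ and $\eta = \lambda/2$ runs in $\poly(n)$ time, giving total expected runtime $\poly(n)$ for fixed $\epsilon,\lambda$.  The main obstacle in executing this plan is carrying out the degree bookkeeping precisely enough to confirm that the single constant $c=c(\epsilon)$ simultaneously controls both the invariant $\epsilon_i \geq \epsilon/2$ and the accumulated weight loss; the estimates above show that both requirements reduce to the same condition $c \gtrsim \epsilon^{-2}\log(1/\epsilon)$ up to constants.
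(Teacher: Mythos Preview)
Your overall strategy---iterate Lemma~\ref{sparsify-one-round} until the blocksize is a constant, then finish with Lemma~\ref{lem}---is exactly the paper's approach. The difference is in the degree bookkeeping: the paper defines a deterministic upper-bound sequence $\delta_0 = b/(2+\epsilon)$, $\delta_{i+1} = \delta_i/2 + 10\sqrt{\delta_i\log\delta_i}$, proves the closed-form estimate $\delta_i \leq \delta_0 2^{-i} + (\delta_0 2^{-i})^{2/3}$ by induction, and then establishes $2\delta_i \leq b_i \leq 3\delta_i$ directly. This sidesteps your case split on $\hat\Delta_i$ entirely and avoids having to reason about transitions between the two regimes. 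Your approach of tracking $\epsilon_i = b_i/\Delta(G_i) - 2$ works too, but is messier to make airtight (for instance, you implicitly rely on the fact that after a case-1 round you re-enter case 2 with $\epsilon_i$ reset near $1$, so earlier losses do not compound; this is true but you do not say it).

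There is one genuine slip in your parameter choice: you assert $c$ depends only on $\epsilon$, but the weight-loss step needs $\sum_i b_i^{-10} \leq \lambda/2$, and since that sum is $\Theta(c^{-10})$ you also need $c \gtrsim \lambda^{-1/10}$. Your closing sentence that ``both requirements reduce to the same condition $c \gtrsim \epsilon^{-2}\log(1/\epsilon)$'' is therefore incorrect; the right threshold is $c \gtrsim \max\bigl(\epsilon^{-2}\log(1/\epsilon),\, \lambda^{-1/10}\bigr)$. The paper handles this by making the number of rounds $t = \bigl\lfloor \log_2 (b\epsilon^3\lambda^3/C) \bigr\rfloor$ depend on $\lambda$ explicitly, so the terminal blocksize is $\Theta(\epsilon^{-3}\lambda^{-3})$. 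This is easy to fix, but as written your argument fails for sufficiently small $\lambda$.
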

\begin{proof}
If $b \leq \frac{10^{20}}{\epsilon^3 \lambda}$, then we can simply apply Lemma~\ref{lem} directly. So, let us assume that $b > \frac{10^{20}}{\epsilon^3 \lambda}$.  Our strategy will be to repeatedly apply Lemma~\ref{sparsify-one-round} for $t =  \bigl \lfloor \log_2 \tfrac{b \epsilon^3 \lambda}{10^{20}} \bigr \rfloor$ rounds. This generates  a series of induced graphs $G_i = G[V_i]$ for $i = 0, \dots, t$ where $V(G) = V_0 \supseteq V_1 \supseteq \dots \supseteq V_t$, along with corresponding vertex partitions $\mathcal V_i = \{ U \cap V_i \mid U \in \mathcal V \}$.  
At the end of this process, we finish by applying Lemma~\ref{lem} to the graph $G_t$ to get the desired independent transversal. 

To analyse this process, let us recursively define parameters $b_i, \delta_i$ as:
\begin{eqnarray*}
b_0 = b, & &\delta_0 = \frac{b}{2 + \epsilon}, \\
b_{i+1} = \left\lceil  b_i/ 2 \right\rceil, && \delta_{i+1} = \delta_i/2 + 10 \sqrt{\delta_i \log \delta_{i}} \qquad \text{for $i = 0, \dots, t-1$.}
\end{eqnarray*}

Note the following straightforward bounds for $i \leq t$:
\begin{equation}
\label{g0a}
b_i = \lceil 2^{-i} b \rceil \geq 2^{-t} b \geq \frac{10^{20}}{\epsilon^3 \lambda}, \qquad \text{and} \qquad \delta_i \geq 2^{-i} \delta_0  \geq 2^{-t} \delta_0 \geq \frac{10^{20}}{3 \epsilon^3 \lambda}.
\end{equation}

Each partition $\mathcal V_i$ during this process will be $b_i$-regular.  The precondition of Lemma~\ref{sparsify-one-round} at each round $i$, namely $b_i \geq 15000$,  follows immediately from Eq.~(\ref{g0a}).  To explain the role of the parameter $\delta_i$,  we show the following three bounds for $i \leq t$ by induction on $i$:
\begin{eqnarray}
\label{g1}
&\delta_i \leq \delta_0 2^{-i}  + (\delta_0 2^{-i})^{2/3} \\
\label{g3}
& b_i < 3 \delta_i \\
\label{g4}
& \Delta(G_i) \leq \delta_i.
\end{eqnarray}
The base case $i = 0$ is clear for all of them. For the induction step for Eq.~(\ref{g1}), let $x =\delta_0 2^{-i}$. Applying the induction hypothesis $\delta_i \leq x + x^{2/3}$ gives
\begin{align*}
\delta_{i+1} &= \delta_i/2 + 10 \sqrt{\delta_i \log \delta_{i}} \leq (x + x^{2/3})/2  + 10 \sqrt{(x + x^{2/3}) \log(x + x^{2/3})} .
\end{align*}
Since $x = \delta_0 2^{-i} \geq 10^{20}/3$, it can be easily checked this is at most $ x/2 + (x/2)^{2/3}$ as desired. Next, for Eq.~(\ref{g3}), we have
$$
b_{i+1} - 3 \delta_{i+1} \leq (b_i/2 + 1/2) - 3 (\delta_i/2 + 10 \sqrt{\delta_i \log \delta_i}) = ( b_i - 3 \delta_i)/2 + (1/2 - 30  \sqrt{\delta_i \log \delta_i}).
$$
By the induction hypothesis, the first term is negative. Since $\delta_i \geq 10^{20}/3$, the second term is also negative. Thus we maintain $b_i < 3 \delta_i$ for all $i$.  

Finally, for Eq.~(\ref{g4}), note that when applying Lemma~\ref{sparsify-one-round} at round $i$, we have  $\Delta(G_{i+1}) \leq D_i/2 + 10 \sqrt{D_i \log D_{i}}$ where $D_i = \max\{ b_i/3, \Delta(G_i) \}$.  By the induction hypothesis, we have $\Delta(G_i) \leq \delta_i$ and $b_i/3 < \delta_i$. Thus, $D_i \leq \delta_i$ and so  $\Delta(G_{i+1}) \leq \delta_i/2 + 10 \sqrt{\delta_i \log \delta_i} = \delta_{i+1}$.
 
Now, after applying Lemma~\ref{sparsify-one-round} in every round, the resulting weights $w(G_i)$  satisfy:
$$
\frac{w(G_{i+1})}{b_{i+1}} \geq \frac{w(G_i)}{b_i} (1 - 1/b_i)
$$
and using the identity $1 -x \geq e^{-2 x}$ for $x \leq 1/2$, this telescopes as:
$$
\frac{w(G_t)}{b_t} \geq \frac{w(G)}{b} \prod_{i=0}^{t-1} (1 - 1/b_i) \geq  \frac{w(G)}{b} \prod_{i=0}^{t-1} e^{-2/b_i} = \frac{w(G)}{b} e^{-2 \sum_{i=0}^{t-1} 1/b_i}.
$$
From Eq.~(\ref{g0a}), we see that $\sum_{i=0}^{t-1} 1/b_i \leq \sum_{i=0}^{t-1} 2^i / b \leq 2^t/b \leq \frac{\epsilon^3 \lambda}{10^{20}} \leq \lambda/20$. Thus,  
$$
\frac{w(G_t)}{b_t} \geq \frac{w(G)}{b} e^{-\lambda/10}.
$$

Finally, we check the preconditions of Lemma~\ref{lem} for the graph $G_t$, i.e., that $b_t > 2 \Delta(G_t)$ and $b_t \leq O(1)$. First, from Eq.~(\ref{g1}) and Eq.~(\ref{g0a}) we get $$
\delta_t \leq \delta_0 2^{-t} (1 + (\delta_0 2^{-t})^{-1/3}) \leq \delta_0 2^{-t} \Bigl( 1 + \bigl( \frac{10^{20}}{3 \epsilon^3 \lambda}\bigr)^{-1/3} \Bigr) \leq 2^{-t}  \delta_0 (1 + \epsilon/10^6).
$$
Hence,  using Eq.~(\ref{g4}), we have 
$$
b_t -2 \Delta(G_t) > b_t - 2 \delta_t \geq 2^{-t} b - 2 \cdot 2^{-t} \delta_0 (1 + \epsilon / 10^6) = 2^{-t} b \Bigl( 1 - \frac{ 2 (1 + \epsilon/10^6) }{2 + \epsilon} \Bigr) > 0.
$$

Also, by definition of $t$, we have $b_t = \lceil b 2^{-t} \rceil \leq 1 + 2 \cdot 10^{20} / \epsilon^3 \lambda$; since $\epsilon, \lambda$ are fixed, this is fixed as well. So Lemma~\ref{lem} on graph $G_t$ and parameter $\eta = \lambda/2$ produces an IT $M$ of weight
\[
w(M) \geq \frac{w(G_t)}{b_t} (1-\eta) \geq \frac{w(G)}{b} ( 1 - \lambda/2) e^{-\lambda/10} \geq \frac{w(G)}{b} ( 1-\lambda). \qedhere
\]
\end{proof}
\section{An LP for Weighted ITs} \label{sec:LP}
To finish the proof of Theorem~\ref{main}, we need to remove the term  $1 - \lambda$
from Lemma~\ref{sparse-multi}. For maximum generality, we use an LP formulation adapted from \cite{Aharoni2007}, which relaxes the condition $b \geq (2 + \epsilon)  \Delta$ to allow each vertex $v$ to be taken with a fractional multiplicity $\gamma_v \in [0,1]$.  Formally, given a graph $G$, vertex partition $\mathcal V$, weight function $w$, and a value $\delta \in \mathbb R_{\geq 0}$, we define $\mathcal P_{G, \delta}$ to be the following LP. (Here, $\delta$ plays the role of $\frac{1}{2+\epsilon}$.)

\begin{equation*}
\begin{array}{rcrclcl}
\displaystyle \max & \multicolumn{3}{l}{\sum\limits_{v\in V(G)}w(v)\gamma_v} \\ 
\textrm{subject to} & \sum\limits_{u\in N(v)}&\gamma_u & \le & \delta & & \forall v\in V(G)\\ 
&\sum\limits_{v \in U}&\gamma_v & = &1 & & \forall U \in \mathcal V \\
& 0\le & \gamma_v & \le&1 & & \forall v \in V(G). 
\end{array}
\end{equation*}
If the LP $\mathcal P_{G, \delta}$ is feasible, we let $\tau_{G, w, \delta}$ be the largest objective function value.  The next results show how to get a fractional version of Theorem~\ref{main} in terms of $\mathcal P_{G, \delta}$.

\begin{proposition}\label{ABZ-thm3}
There is a randomized algorithm which takes as input parameters $\delta, \lambda \in (0,1/2)$, a graph $G$ with vertex partition $\mathcal V$, a vector $\gamma$ in $\mathcal P_{G, \delta}$, and a non-negative weight function $w$ on $G$, and returns an IT in $G$ with weight at least $(1 - \lambda) \sum_{v \in V(G)} \gamma_v w(v)$. For fixed $\delta$ and $\lambda$, the expected runtime is $\poly(n)$.
\end{proposition}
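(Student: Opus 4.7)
The plan is to reduce Proposition~\ref{ABZ-thm3} to Lemma~\ref{sparse-multi} via a blowup construction in which the given LP solution $\gamma$ becomes (essentially) the uniform solution on an auxiliary graph $G'$. Since $\delta < 1/2$, one can set $\epsilon := 1/\delta - 2 > 0$, so in particular $\delta < 1/(2+\epsilon')$ for some $\epsilon' > 0$ slightly smaller than $\epsilon$. Fix a positive integer $N$ (to be determined). For each $v \in V(G)$, select an integer $n_v \in \{\lfloor \gamma_v N \rfloor, \lceil \gamma_v N \rceil\}$ so that $\sum_{v \in U} n_v = N$ holds for every block $U \in \mathcal V$; this is possible because $\sum_{v \in U} \gamma_v = 1$, and can be arranged greedily. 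Introduce $n_v$ copies of each $v$, assign each copy weight $w(v)$, and join two copies whenever their underlying vertices are adjacent in $G$. The resulting graph $G'$ has an $N$-regular partition $\mathcal V' := \{U' : U \in \mathcal V\}$ where $U'$ is the union of all copies of vertices of $U$. Each copy of $v$ has at most $\sum_{u \in N(v)} n_u \leq \sum_{u \in N(v)}(\gamma_u N + 1) \leq \delta N + \Delta(G)$ neighbours, so for $N = \Omega(\Delta(G)/\epsilon)$ one gets $\Delta(G') \leq N/(2+\epsilon')$, and the preconditions of Lemma~\ref{sparse-multi} hold with parameter $\epsilon'$.

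Apply Lemma~\ref{sparse-multi} to $(G', \mathcal V', w')$ with parameters $\epsilon'$ and $\lambda/2$, obtaining an IT $M'$ of $G'$ with $w'(M') \geq (1 - \lambda/2) w'(G')/N$. Since each block of $G'$ descends from a single block of $G$ and contains only copies of vertices from that block, the projection $I := \{v \in V(G) : v^{(j)} \in M' \text{ for some } j\}$ is an IT of $G$ with $w(I) = w'(M')$. Writing $W := \sum_v \gamma_v w(v)$, the rounding bound $w'(G') = \sum_v n_v w(v) \geq NW - w(G)$ yields $w(I) \geq (1-\lambda/2)(W - w(G)/N)$, and taking $N \geq 4 w(G)/(\lambda W)$ brings this up to $(1-\lambda)W$ as required.

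The main obstacle is that $N$ must be polynomial in $n$ for the algorithm to run in polynomial time, whereas the ratio $w(G)/W$ can in general be unbounded (for instance when $\gamma$ concentrates its mass on a low-weight vertex while a very heavy vertex carries only a tiny $\gamma$-value). The introductory remarks about ``oversampling'' point to the intended fix: after first deleting every vertex with $\gamma_v = 0$ (which can never improve the IT), one \emph{boosts} the multiplicities $n_v$ of the heavy vertices beyond $\lfloor \gamma_v N \rfloor$, using the slack $\delta < 1/(2+\epsilon)$ to absorb the extra contribution to the degree constraint in an enlarged $N'$-regular partition. A short calculation then shows that $w(G)/W$ can be brought down to $\poly(n, 1/\lambda)$ at the cost of at most an additional $(1 - \lambda/2)$ factor in the effective LP value; composed with the loss from Lemma~\ref{sparse-multi} this still yields at least $(1-\lambda)W$. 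With $|V(G')| = O(N |\mathcal V|) = \poly(n)$ for fixed $\delta$ and $\lambda$, and Lemma~\ref{sparse-multi} running in expected $\poly(|V(G')|)$ time, the overall expected runtime is $\poly(n)$ as claimed.
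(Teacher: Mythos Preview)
Your blow-up idea is exactly the paper's approach, and your degree computation is correct. The gap is in the rounding. By choosing $n_v \in \{\lfloor \gamma_v N\rfloor, \lceil \gamma_v N\rceil\}$ so that each block has size exactly $N$, you can only guarantee $w'(G') \geq NW - w(G)$, which forces $N \geq \Theta(w(G)/(\lambda W))$. You correctly identify that this ratio is unbounded, but the ``oversampling'' fix you sketch is not carried out (the ``short calculation'' is missing), and it also misreads the paper: the oversampling mentioned in the introduction refers to Theorem~\ref{ABZ-thm4}, where it is used to remove the factor $(1-\lambda)$ altogether. It is \emph{not} used in the proof of Proposition~\ref{ABZ-thm3}.

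The paper's fix at this step is much simpler and avoids the issue entirely. Instead of balancing floors and ceilings, take $n_v = \lceil b\gamma_v\rceil$ for every $v$, with $b = \lceil 2n/(\epsilon\lambda)\rceil$ (and $\epsilon = 1/2 - \delta$). Since $w\geq 0$, this immediately gives $w'(G') \geq b\sum_v \gamma_v w(v) = bW$ with no reference to $w(G)$. The block sizes now lie between $b$ and $b+n$; to regularize, discard from each block the $|U'|-b$ vertices of \emph{lowest} weight. Because the discarded vertices are below average, each block retains at least a $b/(b+n)$ fraction of its weight, so $w'(G'') \geq \tfrac{b}{b+n}\, bW$, and with $n/b \leq \epsilon\lambda/2 \leq \lambda/2$ this extra loss combines with the $(1-\lambda/2)$ from Lemma~\ref{sparse-multi} to give $(1-\lambda)W$. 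The block size $b$ is $\poly(n)$ for fixed $\delta,\lambda$, so $|V(G')| = \poly(n)$ and the runtime bound follows. The one-line summary: round \emph{up} everywhere and then trim by weight, rather than rounding to hit the target block size exactly.
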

\begin{proof}
 Let $\epsilon = 1/2 - \delta$. Form a new graph $G'$ by creating, for each vertex $v \in V(G)$, a group of $\lceil b \gamma_v \rceil$ new vertices in $V(G')$ where $b = \left\lceil \tfrac{2 n}{\epsilon \lambda} \right\rceil$. Also, $G'$ has an  edge $u_1 u_2$ iff $f(u_1)f(u_2) \in E(G)$, where $f: V(G') \rightarrow V(G)$ is the function mapping each vertex $u \in V(G')$ to its corresponding vertex in $V(G)$. (So $G'$ is a blow-up of $G$ by independent sets.)  We define a vertex partition on $G'$ by  $\mathcal V' = \{ f^{-1} (U) : U \in \mathcal V \}$ and a weight function $w'$ on $G'$ by $w'(u) = w(f(u))$. 
 
 Let us note a few bounds on $G'$. The weight of $G'$ is given by
$$ w'(G') = \sum\limits_{u \in V(G')} w'(u) = \sum\limits_{v \in V(G)} \lceil b \gamma_v \rceil w(v)  \geq b \sum\limits_{v \in V(G)} \gamma_v w(v). $$

Now consider some vertex $u \in V(G')$ with $f(u) = v \in V(G)$. Since $\gamma$ satisfies the LP, we have 
\begin{align*}
\deg(u) &=  \sum\limits_{x \in N(v)} \lceil b \gamma_x \rceil \leq \sum\limits_{x \in N(v)} (b \gamma_x + 1) \leq n + b \sum\limits_{x \in N(v)} \gamma_x \leq n + b \delta.
\end{align*}

Each block $U'  = f^{-1}(U) \in \mathcal V'$ has size $|U'| = \sum\limits_{x \in U} \lceil b \gamma_x \rceil$; since $\sum_{x \in U} \gamma_x = 1$, this implies that 
\begin{equation}
\label{usizeeqn}
b \leq |U'| \leq n + b.
\end{equation}

Next, in light of Eq.~(\ref{usizeeqn}), we form a graph $G''$ by discarding the lowest-weight $|U'| - b$ vertices in each block $U' \in \mathcal V'$. The resulting blocks are $b$-regular, and clearly $\Delta(G'') \leq \Delta(G') \leq n + b \delta$. Since $w$ is non-negative, discarding the lowest-weight vertices gives $w'(G'') \geq (\frac{b}{n+b}) w'(G')$.  

We apply Lemma~\ref{sparse-multi} to this graph $G''$ and weight function $w'$, with parameters $\epsilon' = \frac{\epsilon}{1-\epsilon}$ and $\lambda' = \lambda/2$  in place of $\epsilon, \lambda$. The preconditions of Lemma~\ref{sparse-multi} hold since since $w$, and hence $w'$, is non-negative, and for $\epsilon < 1/2$ we have:
\begin{align*}
\frac{b}{\Delta(G'')} &\geq \frac{b}{n +b \delta} = \frac{1}{n/b + \delta}  \geq \frac{1}{ \tfrac{\epsilon}{2} + (1/2 - \epsilon)} = 2 (1 + \epsilon').
\end{align*}

This gives an IT $M''$ of $G''$ with  $w'(M'') \geq \tfrac{w'(G'')}{b} (1 - \tfrac{\lambda}{2})$. Then $M = f(M'')$ is an IT of $G$ with weight $w(M) = w'(M'')$.  Since $n/b \leq \lambda/2$, we then have:
\begin{align*}
w(M) \geq  \frac{w'(G'')}{b} \Bigl( 1 - \frac{\lambda}{2} \Bigr) \geq  \Bigl( \frac{b}{n+b} \Bigr) \Bigl( 1 - \frac{\lambda}{2} \Bigr) w'(G) \geq (1 - \lambda)  \sum_{v \in V(G)} \gamma_v w(v) 
\end{align*}
as desired. For fixed $\delta$ and $\lambda$, the values $\epsilon', \lambda'$ are fixed, and the values $b, |V(G'')|$ are polynomial in $n$. So the expected runtime is $\poly(n)$.
\end{proof}

\begin{theorem}\label{ABZ-thm4}
There is a randomized algorithm which takes as input a parameter $\delta \in (0,1/2)$, a graph $G$ with vertex partition where $\mathcal P_{G, \delta} \neq \emptyset$, and a weight function $w$ on $G$, and returns an IT in $G$ with weight at least $\tau_{G, w,\delta}$. For fixed $\delta$, the expected runtime is $\poly(n)$.
\end{theorem}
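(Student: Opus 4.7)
The plan is to derive Theorem~\ref{ABZ-thm4} from Proposition~\ref{ABZ-thm3} via three reductions: solving the LP to identify the target value $\tau$, reducing to non-negative weights by a blockwise shift, and eliminating the $(1-\lambda)$ approximation loss by an oversampling construction.

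First, I would solve the LP $\mathcal{P}_{G,\delta}$ in polynomial time with any standard LP algorithm to obtain an optimal extreme-point solution $\gamma^*$ and the value $\tau:=\tau_{G,w,\delta}$.

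Second, I would reduce to non-negative weights by blockwise shifting. For each block $U\in\mathcal{V}$, set $c_U:=\min_{v\in U}w(v)$ and define $\tilde w(v):=w(v)-c_{\mathcal{V}(v)}$, so that $\tilde w\ge 0$. Since every feasible $\gamma\in\mathcal{P}_{G,\delta}$ satisfies $\sum_{v\in U}\gamma_v=1$ for each block, both the LP objective and the weight of any IT decrease by exactly $\sum_U c_U$ under this shift; in particular $\gamma^*$ remains LP-optimal for $\tilde w$ with value $\tilde\tau:=\tau-\sum_U c_U$, and $w(M)\ge\tau$ is equivalent to $\tilde w(M)\ge\tilde\tau$. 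So it suffices to produce an IT $M$ with $\tilde w(M)\ge \tilde \tau$ under non-negative $\tilde w$.

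Third, I would eliminate the $(1-\lambda)$ loss. Applying Proposition~\ref{ABZ-thm3} directly to $(\tilde w,\gamma^*,\lambda)$ yields only $\tilde w(M)\ge(1-\lambda)\tilde\tau$. To reach $\tilde\tau$ exactly, I would follow the oversampling strategy indicated in the paper's outline: construct an auxiliary pair $(\hat w,\hat\gamma)$ feasible for $\mathcal{P}_{G,\delta}$ in which high-$\tilde w$ vertices are inflated in weight by a factor chosen so that $\sum_v\hat\gamma_v\hat w(v)\ge\tilde\tau/(1-\lambda)$, while the inflation is designed to be transparent on the support of any IT that Proposition~\ref{ABZ-thm3} might return (for instance, by confining the inflation to vertices where $\gamma^*$ is tight and using LP complementary slackness to guarantee these vertices are avoided by the algorithm's output). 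Feeding $(\hat w,\hat\gamma,\lambda)$ into Proposition~\ref{ABZ-thm3} then yields an IT $M$ whose $\hat w$-weight is at least $(1-\lambda)\cdot\tilde\tau/(1-\lambda)=\tilde\tau$; the transparency argument gives $\tilde w(M)=\hat w(M)\ge\tilde\tau$. Undoing the shift from the second step recovers $w(M)\ge\tau$.

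The main obstacle is constructing the oversampling $(\hat w,\hat\gamma)$ so that the LP-value boost and the IT-weight transparency hold simultaneously, which requires a delicate use of the primal-dual structure of $\mathcal{P}_{G,\delta}$ at $\gamma^*$; an alternative realization is to iterate Proposition~\ref{ABZ-thm3} with decreasing $\lambda$ and a residual weight updated using the previously-found IT, exploiting the combinatorial fact (provable from $\mathcal{P}_{G,\delta}\neq\emptyset$) that some IT attaining weight $\tilde\tau$ must exist.
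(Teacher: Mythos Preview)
Your first two steps are fine, and you have correctly identified that the crux is absorbing the $(1-\lambda)$ loss from Proposition~\ref{ABZ-thm3}. However, neither of the mechanisms you propose in step three actually works.

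The ``inflate weights on tight vertices and use complementary slackness'' idea is self-defeating. To raise $\sum_v\hat\gamma_v\hat w(v)$ above $\tilde\tau$, you must inflate $\hat w$ on vertices with $\hat\gamma_v>0$. But Proposition~\ref{ABZ-thm3} is a black box that blows up the graph and calls $\FWPIT$; it knows nothing about the LP dual, so you have no way to force the returned IT $M$ to avoid the inflated vertices. If $M$ does hit them, then $\tilde w(M)<\hat w(M)$ and the bound you get on $\tilde w(M)$ is no better than before. The alternative of iterating with decreasing $\lambda$ fails for runtime reasons: Proposition~\ref{ABZ-thm3} is polynomial only for \emph{fixed} $\lambda$, so driving $\lambda\to 0$ destroys the polynomial bound. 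And the ``combinatorial fact'' that an IT of weight exactly $\tilde\tau$ exists is precisely what the theorem is trying to establish, so invoking it is circular.

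The idea you are missing is to \emph{restrict the graph}, not just reweight it. The paper sets $\epsilon=1/2-\delta$, sorts each block by weight, and keeps only the top prefix carrying $1-\epsilon$ of the $\gamma$-mass, discarding the rest. The weight shift is then by the threshold value $w_{U,s_U}$ (not by $\min_U w$), which makes $w'$ non-negative on the \emph{restricted} graph $G'$. Because the IT must now lie in $G'$, the shift is exactly reversible on $M$: $w(M)=w'(M)+\sum_U w_{U,s_U}$. Meanwhile, rescaling $\gamma$ by $1/(1-\epsilon)$ gives a feasible point for the stricter polytope $\mathcal P_{G',\delta'}$ with $\delta'=\frac{\delta}{1-\epsilon}<1/2$, and choosing $\lambda'=\epsilon$ makes the $(1-\lambda')$ loss cancel the $1/(1-\epsilon)$ inflation exactly. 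The discarded tail contributes nothing after the shift because those vertices have $w_{U,j}\le w_{U,s_U}$. This coupling of the truncation, the threshold shift, and the choice $\lambda'=\epsilon$ is the whole trick.
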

\begin{proof}
Let  $\epsilon = 1/2 - \delta$.  We begin by sorting the vertices in each block in descending order of weight; the vertices in block $U$ are labeled as $v_{U,1}, v_{U,2}, \dots, v_{U,t_U}$ with $w(v_{U,1}) \geq w(v_{U,2}) \geq \dots \geq w(v_{U,t_U})$.  Next, we solve the LP to obtain a solution $\gamma \in \mathcal P_{G, \delta}$ with $\sum_{v \in V(G)} \gamma_v w(v) = \tau_{G, w, \delta}$. This takes $\poly(n)$ time since $\mathcal P_{G, \delta}$ has $\poly(n)$ constraints.  

To simplify the notation, we write $\gamma_{U,j}$ and  $w_{U,j}$ as short-hand for $\gamma_{v_{U,j}}$ and $w(v_{U,j})$, respectively. 

Since $\sum_{v \in U} \gamma_v = 1$, each block $U$ has a smallest index $s_U$ with $\sum_{k = 1}^{s_U} \gamma_{U,k} \geq 1 - \epsilon$. Form a new graph $G'$ by discarding vertices $v_{U,j}$ with $j > s_U$ in each block $U$, and define a weight function $w'$ on $G'$ by $w'(v) = w(v) - w_{U,s_U}$ for $v \in U$. Because of the sorted order of the vertices, $w'$ is non-negative. We also define a multiplicity vector $\gamma' \in [0,1]^{V(G')}$ for $G'$; again, to simplify the notation we write $\gamma'_{U,j}$ instead of $\gamma'_{v_{U,j}}$. The vector $\gamma'$ is defined by
$$
\gamma'_{U,j} = \begin{cases}
\frac{\gamma_{U,j}}{1-\epsilon} & \text{if $j < s_U$} \\
1 - \frac{\sum_{k=1}^{s_U-1} \gamma_{U,k}}{1-\epsilon} & \text{if $j = s_U$.}  
\end{cases}
$$

Note that $\gamma'_{U,s_U} \geq 0$ by definition of $s_U$. For $j < s_U$, we have $\gamma'_{U, j} = \frac{\gamma_{U,j}}{1-\epsilon} \leq \frac{\sum_{i=0}^j \gamma_{U,i}}{1-\epsilon} \leq \frac{1-\epsilon}{1-\epsilon} \leq 1$, and clearly $\gamma'_{U, s_U} \leq 1$. We also claim that the following bound holds for all $U, j$:
\begin{equation}
\label{ty5}
\gamma'_{U,j} \leq \frac{\gamma_{U,j}}{1-\epsilon}.
\end{equation}

It is clear for $j < s_U$, while for $j = s_U$, we have
\begin{align*}
\gamma'_{U,s_U} &= \frac{1 - \epsilon -  \sum_{k=1}^{s_U-1} \gamma_{U,k}}{1-\epsilon} =  \frac{\gamma_{U,s_U} + 1 - \epsilon  -  \sum_{k=1}^{s_U} \gamma_{U,k}}{1-\epsilon } \leq  \frac{\gamma_{U,s_U}}{1-\epsilon}.
\end{align*}

We next claim that $\gamma' \in \mathcal P_{G', \delta'}$  where $\delta' = \frac{1/2 - \epsilon}{1 - \epsilon} \in (0,1/2)$. To see this, note that the constraint $\sum_{k} \gamma'_{U,k} = 1$ follows from the definition of $\gamma'_{U,s_U}$ and the constraint $\sum_{u \in N(v)} \gamma'_u \leq \delta'$ follows from Eq.~(\ref{ty5}) and the fact that $\gamma \in \mathcal P_{G, \delta}$.

So let us set $\lambda' = \epsilon$ and apply Proposition~\ref{ABZ-thm3} to $G', w', \gamma', \delta', \lambda'$ to get an IT $M$ of $G'$ with 
$$
w'(M) \geq (1 - \lambda') \negthickspace \sum_{v \in V(G')} \negthickspace \gamma'_{v} w'(v) =  \frac{1 - \lambda'}{1 - \epsilon} \sum_{U \in \mathcal V} \sum_{j = 1}^{s_U - 1}  \gamma_{U,j} (w_{U,j} - w_{U,s_U}) =  \sum_{U \in \mathcal V} \sum_{j = 1}^{s_U - 1}  \gamma_{U,j} (w_{U,j} - w_{U,s_U}),
$$
where we omit the summand $j = s_U$ since $w'(v_{U,s_U}) = 0$. Since $M$ is an IT of $G'$, we have 
\begin{align*}
w(M) &= w'(M) + \sum_{U \in \mathcal V} w_{U,s_U}   \geq \sum_{U \in \mathcal V} \Bigl( w_{U,s_U} +  \sum_{j = 1}^{s_U - 1}  \gamma_{U,j} (w_{U,j} - w_{U,s_U}) \Bigr) \\
&= \sum_{U \in \mathcal V} \Bigl( w_{U,s_U} +  \sum_{j = 1}^{t_U}  \gamma_{U,j} (w_{U,j} - w_{U,s_U}) - \sum_{j=s_U}^{t_U} \gamma_{U,j} (w_{U,j} - w_{U,s_U}) \Bigr).
\end{align*}

Since $\sum_{j = 1}^{t_U} \gamma_{U,j} = 1$ and $\sum_U \sum_{j=1}^{t_U} \gamma_{U,j} w_{U,j} = \tau_{G, w, \delta}$, this is equal to $$
\tau_{G, w, \delta}-  \sum_{U \in \mathcal V}  \sum_{j=s_U}^{t_U} \gamma_{U,j} (w_{U,j} - w_{U,s_U}).
$$
Furthermore, since $w_{U,j} \leq w_{U,s_U}$ for $j > s_U$, this in turn is at least $\tau_{G, w, \delta}$.

When $\delta$ is fixed, then so are $\delta', \lambda'$, and so Proposition~\ref{ABZ-thm3} has $\poly(n)$ expected runtime.
\end{proof}

As a simple corollary, this gives  Theorem~\ref{main} (stated again here for convenience).
\begin{reptheorem}{main}
There is a randomized algorithm which takes as inputs a parameter $\epsilon > 0$, a graph $G$ with a $b$-regular vertex partition where $b \geq (2 + \epsilon) \Delta$, and a weight function $w: V(G) \rightarrow \mathbb R$, and finds an IT in $G$ with weight at least $w(G)/b$. For fixed $\epsilon$, the expected runtime is $\poly(n)$.
\end{reptheorem}
\begin{proof}
Let $\delta = \frac{1}{2 + \epsilon}$ and observe that $\gamma_{v} = \frac{1}{b}$ is a solution to $\mathcal P_{G, \delta}$, since every vertex $v$ has
$$
\sum_{u \in N(v)} \gamma_u = \sum_{u \in N(v)} \frac{1}{b} \leq \frac{\Delta}{b} \leq \frac{1}{2 + \epsilon}  = \delta,
$$
and clearly $\sum_{v \in U} \gamma_v =  1$ for each block $U$. We thus have $\tau_{G, w, \delta} \geq \sum_v \gamma_v w(v) = w(G)/b$.  Now apply Theorem~\ref{ABZ-thm4} with parameter $\delta$; note that if $\epsilon$ is fixed then so is $\delta$.
\end{proof}

\subsection{Weighted PITs}
\label{pit-sec}
An LP relaxation similar to $\mathcal P_{G, \delta}$ can be formulated for weighted PITs. Given a graph $G$, vertex partition $\mathcal V$, weight function $w: V(G) \rightarrow \mathbb R$, and value $\delta \in \mathbb R_{\geq 0}$, we define $\tau^*_{G, w, \delta}$ to be the largest objective function value to the following LP denoted $\mathcal P^*_{G, \delta}$.
\begin{equation*}
\begin{array}{rcrclcl}
  \displaystyle \max & \multicolumn{3}{l}{\sum\limits_{v\in V(G)}w(v)\gamma_v} \\
\textrm{subject to} & \sum\limits_{u\in N(v)}&\gamma_u & \le & \delta & & \forall v\in V(G)\\
&\sum\limits_{v\in U}&\gamma_v & \le &1 & & \forall U \in \mathcal V \\
& 0\le & \gamma_v & \le&1 & & \forall v \in V(G). 
\end{array}
\end{equation*}

Aharoni, Berger, and Ziv~\cite{Aharoni2007} proved the following:\footnote{The statement in~\cite{Aharoni2007} uses very different notation, and is formulated in terms of the dual LP.}
\begin{theorem}[{\cite[Theorem 10]{Aharoni2007}}]\label{ABZ}
For any weight function $w$, $G$ has a PIT of weight at least $\tau^*_{G, w,1/2}$.
\end{theorem}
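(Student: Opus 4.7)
The plan is to reduce Theorem~\ref{ABZ} to Theorem~\ref{ABZ-thm4} via an augmentation of the graph with dummy vertices, and then to handle the boundary value $\delta = 1/2$ through a scaling argument combined with the finiteness of the set of PITs of $G$.

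First, I would let $\gamma^* \in \mathcal P^*_{G, 1/2}$ be an optimal solution, with objective value $\tau = \tau^*_{G, w, 1/2}$. Since $\mathcal P^*_{G, 1/2}$ has only upper-bound constraints on $\gamma$, any vertex with $w(v) < 0$ can be set to $\gamma^*_v = 0$ without loss, so I may restrict attention to vertices of non-negative weight and, equivalently, delete the negative-weight vertices from $G$ (a PIT of the resulting subgraph is still a PIT of $G$). Next I would form an augmented graph $G^+$ by adjoining to each block $U \in \mathcal V$ a single new dummy vertex $d_U$ of weight zero with no incident edges, placed into an augmented block $U^+ = U \cup \{d_U\}$. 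For any feasible $\gamma \in \mathcal P^*_{G, \delta}$, extending by $\gamma_{d_U} = 1 - \sum_{v \in U} \gamma_v \ge 0$ yields a vector in $\mathcal P_{G^+, \delta}$ with the same objective value, because dummies carry zero weight and contribute nothing to neighbourhood sums.

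The main obstacle is that Theorem~\ref{ABZ-thm4} requires $\delta$ strictly less than $1/2$, whereas here $\delta = 1/2$ is exactly on the boundary. I would overcome this by scaling: for each integer $n \ge 2$, set $\delta_n = \tfrac{1}{2}(1 - 1/n) < 1/2$ and let $\gamma^{(n)}$ be the extension of $(1 - 1/n)\gamma^*$ to $G^+$ obtained from the dummy construction. This is feasible for $\mathcal P_{G^+, \delta_n}$, because the neighbourhood constraints scale linearly and the block-sum equalities are absorbed by the dummies. Its objective value is $(1 - 1/n)\tau$, so applying Theorem~\ref{ABZ-thm4} to $G^+$ with parameter $\delta_n$ produces an IT $M_n$ of $G^+$ with $w(M_n) \ge \tau_{G^+, w, \delta_n} \ge (1 - 1/n)\tau$. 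Discarding any dummy vertex $d_U \in M_n$ (which contributes $0$) yields a PIT $M_n'$ of $G$ of the same weight.

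To conclude, I would use that $G$ has only finitely many PITs, so by the pigeonhole principle some single PIT $M^*$ coincides with $M_n'$ for infinitely many $n$. For this $M^*$, the bound $w(M^*) \ge (1 - 1/n)\tau$ holds along an infinite subsequence, hence $w(M^*) \ge \tau = \tau^*_{G, w, 1/2}$, as required. The existential nature of the statement is essential here: the concluding finiteness step is non-constructive and does not yield a polynomial-time algorithm at the boundary $\delta = 1/2$, but an algorithmic guarantee is not claimed in Theorem~\ref{ABZ}.
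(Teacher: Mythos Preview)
The paper does not give its own proof of Theorem~\ref{ABZ}; the result is simply quoted from~\cite{Aharoni2007}. Your argument is nonetheless correct and supplies an independent derivation from the paper's own Theorem~\ref{ABZ-thm4}. The dummy-vertex reduction you use is exactly the one the paper employs in the unnumbered Corollary immediately following Theorem~\ref{ABZ} (which covers only $\delta<1/2$); your additional ingredient is the scaling $\gamma\mapsto(1-1/n)\gamma$ together with a finiteness/limit step to reach the boundary $\delta=1/2$. Two minor simplifications are available: the deletion of negative-weight vertices is unnecessary, since Theorem~\ref{ABZ-thm4} allows arbitrary weights and in any case $\tau^*_{G,w,1/2}\ge 0$ (take $\gamma=0$), so the limit argument works regardless; and the pigeonhole step can be replaced by the direct observation that a maximum-weight PIT $M^*$ satisfies $w(M^*)\ge w(M_n')\ge (1-1/n)\tau$ for every $n$, hence $w(M^*)\ge\tau$.
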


 This is used by~\cite{Aharoni2007} to show Theorem~\ref{th2ex}. Our results for weighted ITs lead to the following analogue of Theorem~\ref{ABZ}:
\begin{corollary} 
There is a randomized algorithm which takes as inputs a parameter $\delta \in (0,1/2)$, a graph $G$ with a vertex partition  and a weight function $w$, and finds a PIT in $G$ with weight at least $\tau^*_{G, w,\delta}$. For fixed $\delta$, the expected runtime is $\poly(n)$.
\end{corollary}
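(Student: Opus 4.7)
The plan is to reduce the weighted PIT problem on $(G, \mathcal V, w)$ to a weighted IT problem on an augmented instance, to which we can directly apply Theorem~\ref{ABZ-thm4}. The key difference between $\mathcal P_{G,\delta}$ and $\mathcal P^*_{G,\delta}$ is that the latter replaces the block equality $\sum_{v \in U} \gamma_v = 1$ with an inequality. This slack can be absorbed by introducing a dummy vertex of weight $0$ into each block.

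Concretely, first solve the LP $\mathcal P^*_{G, \delta}$ in $\poly(n)$ time to obtain an optimal solution $\gamma$ with objective value $\tau^*_{G, w, \delta}$. Construct an augmented graph $G'$ on vertex set $V(G') = V(G) \cup \{z_U : U \in \mathcal V\}$, where $z_U$ is a new isolated vertex (no incident edges in $G'$), and let $E(G') = E(G)$. Define the augmented partition $\mathcal V' = \{U \cup \{z_U\} : U \in \mathcal V\}$, which is a partition of $V(G')$. Extend $w$ to a weight function $w'$ on $G'$ by setting $w'(z_U) = 0$ for each $U \in \mathcal V$. Finally, define $\gamma'$ on $V(G')$ by $\gamma'_v = \gamma_v$ for $v \in V(G)$ and $\gamma'_{z_U} = 1 - \sum_{v \in U} \gamma_v$ for each dummy.

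The slack $\gamma'_{z_U}$ lies in $[0,1]$ because $\mathcal P^*_{G, \delta}$ enforces $\sum_{v \in U} \gamma_v \leq 1$ and $\gamma_v \geq 0$. By construction $\sum_{v \in U'} \gamma'_v = 1$ for each $U' \in \mathcal V'$; the neighbourhood constraint $\sum_{u \in N_{G'}(v)} \gamma'_u \leq \delta$ carries over unchanged from $\gamma \in \mathcal P^*_{G, \delta}$ since the dummy vertices have no neighbours. Hence $\gamma' \in \mathcal P_{G', \delta}$, which is therefore non-empty, and the associated objective value is
\[
\sum_{v \in V(G')} w'(v) \gamma'_v = \sum_{v \in V(G)} w(v) \gamma_v = \tau^*_{G, w, \delta},
\]
so $\tau_{G', w', \delta} \geq \tau^*_{G, w, \delta}$.

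Now invoke Theorem~\ref{ABZ-thm4} on $(G', \mathcal V', w', \delta)$ to obtain, in expected $\poly(|V(G')|) = \poly(n)$ time, an IT $M'$ of $G'$ with $w'(M') \geq \tau_{G', w', \delta} \geq \tau^*_{G, w, \delta}$. Set $M = M' \setminus \{z_U : U \in \mathcal V\}$. Since $M'$ is independent in $G'$ and the removed vertices were isolated, $M$ is independent in $G$; since $|M' \cap (U \cup \{z_U\})| = 1$ for every $U$, we have $|M \cap U| \leq 1$, so $M$ is a PIT of $G$ with respect to $\mathcal V$. Because the dummies contribute nothing to $w'(M')$, we get $w(M) = w'(M') \geq \tau^*_{G, w, \delta}$, as required. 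There is no real obstacle here; the only subtle point is checking that the dummy-augmentation respects every constraint of $\mathcal P_{G', \delta}$ (in particular the equality constraint), which is exactly what motivates the definition of $\gamma'_{z_U}$.
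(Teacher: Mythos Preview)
Your proof is correct and follows essentially the same approach as the paper: add an isolated weight-zero dummy vertex to each block, observe that any feasible $\gamma$ for $\mathcal P^*_{G,\delta}$ extends to a feasible vector for $\mathcal P_{G',\delta}$ by putting the slack on the dummies (so $\tau_{G',w',\delta}\ge\tau^*_{G,w,\delta}$), apply Theorem~\ref{ABZ-thm4}, and strip the dummies to get the desired PIT. The only difference is that you explicitly solve $\mathcal P^*_{G,\delta}$ first, which is unnecessary since Theorem~\ref{ABZ-thm4} solves the LP internally; the paper just uses the correspondence to argue the inequality between optima.
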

\begin{proof}
Let $G'$ be the graph obtained by adding, for each block $U$, an isolated dummy vertex $x_U$ with weight zero. Any solution $\gamma$ to $\mathcal P^*_{G, \delta}$ corresponds a solution to $\mathcal P_{G', \delta}$, by setting $\gamma_{x_U} = 1 - \sum_{v \in U} \gamma_v$. Thus, applying Theorem~\ref{ABZ-thm4} to graph $G'$ gives an IT $M'$ with weight at least $\tau_{G', w, \delta} \geq \tau^*_{G, w, \delta}$.  Removing the dummy vertices from $M'$ yields a PIT $M$ of $G$ with $w(M) = w(M') \geq \tau^*_{G, w, \delta}$.
\end{proof}

\section{Independent Transversals with Vertex Restrictions}
\label{sec:restrict}
A number of combinatorial constructions use independent transversals with additional constraints. One common  restriction is that the IT must include certain vertices or be disjoint from a given set of vertices. Some LLL-based algorithms for ITs can accomodate these restrictions, sometimes with additional slack in parameters \cite{Harris2016a}.  

Our results on weighted independent transversals give the following crisp characterization:
\begin{theorem}
\label{avoid-thm}
There is a randomized algorithm which takes as inputs a parameter $\epsilon > 0$, a graph $G$ with a vertex partition $\mathcal V$ where $b^{\min} \geq (2 + \epsilon) \Delta$, and a vertex set $L \subseteq V(G)$ of size $|L| < b^{\text{min}}$. It returns an IT disjoint from $L$. For fixed $\epsilon$, the expected runtime is $\poly(n)$.
\end{theorem}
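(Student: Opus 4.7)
The plan is to reduce Theorem~\ref{avoid-thm} to Theorem~\ref{ABZ-thm4} by choosing an appropriate integer-valued weight function that heavily penalizes membership in $L$. The key observation is that Theorem~\ref{ABZ-thm4} handles arbitrary (non-regular) vertex partitions and produces an IT whose weight is at least the LP optimum $\tau_{G,w,\delta}$, not merely the ``uniform'' value $w(G)/b$. So by exhibiting a feasible LP solution whose objective value is strictly greater than $-1$, and using integrality of the weights, we can force the returned IT to be disjoint from $L$.

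Concretely, I would set $\delta = \frac{1}{2+\epsilon}$ and $b = b^{\text{min}}(\mathcal V)$, and define the weight function $w(v) = -1$ if $v \in L$ and $w(v) = 0$ otherwise. To certify that $\mathcal P_{G,\delta}$ is nonempty, I would exhibit the explicit feasible point $\gamma_v = 1/|\mathcal V(v)|$: the block constraint $\sum_{v \in U}\gamma_v = 1$ holds by construction, and for every $v \in V(G)$ we have $\sum_{u \in N(v)}\gamma_u \le \Delta(G)/b \le \delta$ by the hypothesis $b \ge (2+\epsilon)\Delta(G)$. This yields the lower bound
\[
\tau_{G,w,\delta} \;\ge\; \sum_{v \in V(G)} \gamma_v w(v) \;=\; -\sum_{v \in L} \tfrac{1}{|\mathcal V(v)|} \;\ge\; -\tfrac{|L|}{b} \;>\; -1,
\]
where the last strict inequality uses $|L| < b$.

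Now apply Theorem~\ref{ABZ-thm4} to $(G, \mathcal V, w, \delta)$ to obtain an IT $M$ with $w(M) \ge \tau_{G,w,\delta} > -1$ in expected $\poly(n)$ time (since $\delta$ is fixed once $\epsilon$ is). Because $w$ takes only values in $\{-1, 0\}$, the quantity $w(M)$ is an integer, so $w(M) > -1$ forces $w(M) \ge 0$, which in turn forces $M \cap L = \emptyset$. Thus $M$ is the desired IT avoiding $L$.

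There isn't really a hard technical step here; the whole proof is a short reduction. The only thing to verify carefully is that the construction of $\gamma$ satisfies the LP constraints of $\mathcal P_{G,\delta}$ (which requires the equality constraint on blocks, matched by the choice $\gamma_v = 1/|\mathcal V(v)|$), and that the integrality argument applies cleanly because the weights are integer-valued and the slack $|L| < b$ is strict. The conceptual content lies in recognizing that weighted IT results automatically yield ``forbidden-set'' IT results via a negative-weight encoding trick, with no loss in the parameter $\epsilon$.
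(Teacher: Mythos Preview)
Your proof is correct and follows essentially the same approach as the paper: define the $\{-1,0\}$-valued weight function penalizing $L$, exhibit $\gamma_v = 1/|\mathcal V(v)|$ as a feasible LP solution with objective value strictly above $-1$, apply Theorem~\ref{ABZ-thm4}, and use integrality of $w(M)$ to conclude $M\cap L=\emptyset$. If anything, your write-up is slightly cleaner in that you directly invoke $\mathcal P_{G,\delta}$ and $\tau_{G,w,\delta}$ (which is what Theorem~\ref{ABZ-thm4} actually bounds), whereas the paper phrases the feasibility check in terms of $\mathcal P^*_{G,\delta}$ before applying Theorem~\ref{ABZ-thm4}.
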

\begin{proof}
By discarding extra vertices from each block, we may assume without loss of generality that $\mathcal V$ is $b$-regular and $b \geq (2 + \epsilon) \Delta$ and $|L| < b$. Define a weight function on $V(G)$ by $w(v)=-1$ for $v \in L$ and $w(v)=0$ for $v\notin L$.  Now use Theorem~\ref{main} to obtain an IT $M$ with $w(M) \geq w(G)/b = -|L|/b > -1$. Since $w$ takes only values $-1$ and $0$, it must be that $w(M) = 0$ and hence $M \cap L = \emptyset$.
\end{proof}

As a simple corollary of Theorem~\ref{avoid-thm}, we can also get ITs which \emph{include} certain given vertices.
\begin{corollary}
\label{2v-avoid}
There is a randomized algorithm which takes as inputs a parameter $\epsilon > 0$, a graph $G$ with a vertex partition $\mathcal V$ where $b^{\text{min}}\geq (2+\epsilon) \Delta$, and a pair of vertices $v_1, v_2$ in the same block as each other. It returns a pair of ITs $M_1$ and $M_2$ of $G$ such that $v_1 \in M_1, v_2 \in M_2$  and $M_1 \setminus \{v_1\} = M_2 \setminus \{v_2 \}$. For fixed $\epsilon$, the expected runtime is $\poly(n)$.
\end{corollary}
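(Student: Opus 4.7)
My plan is to reduce the problem to a single application of Theorem~\ref{avoid-thm}. The idea is to fix a common ``core'' IT on the rest of the graph, which is compatible with both $v_1$ and $v_2$, and then complete it in the two possible ways.

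Let $U = \mathcal V(v_1) = \mathcal V(v_2)$ and form the induced subgraph $G' = G[V(G) \setminus U]$ with vertex partition $\mathcal V' = \mathcal V \setminus \{U\}$. Let $L = (N(v_1) \cup N(v_2)) \setminus U$ be the set of forbidden vertices in $G'$. Since each of $v_1, v_2$ has at most $\Delta(G)$ neighbours, we have $|L| \leq 2\Delta(G) < (2+\epsilon)\Delta(G) \leq b^{\min}(\mathcal V) = b^{\min}(\mathcal V')$, where the last equality uses that removing the block $U$ does not affect the sizes of the remaining blocks. Also $\Delta(G') \leq \Delta(G)$, so the partition $\mathcal V'$ on $G'$ still satisfies $b^{\min}(\mathcal V') \geq (2+\epsilon) \Delta(G')$.

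Now invoke Theorem~\ref{avoid-thm} on $G'$, $\mathcal V'$, $L$ with parameter $\epsilon$; in expected $\poly(n)$ time it returns an IT $M'$ of $G'$ with respect to $\mathcal V'$ such that $M' \cap L = \emptyset$. Set $M_1 = M' \cup \{v_1\}$ and $M_2 = M' \cup \{v_2\}$. Each $M_i$ picks exactly one vertex from every block of $\mathcal V$ (namely $v_i$ from $U$, and the same selection as $M'$ elsewhere). Independence of $M_1$ follows since $M'$ is independent in $G$ and $v_1$ has no neighbour in $M'$, because $M'$ avoids $L \supseteq N(v_1) \setminus U$ while $M' \cap U = \emptyset$; symmetrically for $M_2$. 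Finally $M_1 \setminus \{v_1\} = M' = M_2 \setminus \{v_2\}$, as required.

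There is no real obstacle here; the only point that needs care is the size bound on $L$, where one must observe that we can discard the vertices of $L$ lying inside $U$ (since they are not in $V(G')$ anyway), yielding the bound $|L| \leq 2\Delta(G)$ that fits strictly below the block size $b^{\min}(\mathcal V')$.
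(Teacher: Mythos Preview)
Your proof is correct and follows essentially the same approach as the paper: remove the common block $U$, apply Theorem~\ref{avoid-thm} to the remaining graph with forbidden set $N(v_1)\cup N(v_2)$, and then adjoin $v_1$ or $v_2$ to the resulting IT. Your version is in fact slightly more careful than the paper's, since you explicitly intersect $L$ with $V(G')$, verify $b^{\min}(\mathcal V')\ge(2+\epsilon)\Delta(G')$, and check independence of the $M_i$.
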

\begin{proof}
Let $U = \mathcal V(v_1) = \mathcal V(v_2)$. Then apply Theorem~\ref{avoid-thm} to graph $G' = G[ V(G) \setminus U]$ with associated partition $\mathcal V' = \mathcal V \setminus \{ U \}$ and  with $L = N(v_1) \cup N(v_2)$. Here $|L| \leq 2 \Delta(G') < b^{\text{min}}(\mathcal V')$ as required. This generates an IT $M'$ of $G'$; now set $M_i = M' \cup \{v_i\}$ for $i = 1,2$.
\end{proof}

 By applying Corollary~\ref{2v-avoid} with $v_1 = v_2$, we get the following even simpler corollary:
\begin{corollary}
\label{1v-avoid}
There is a randomized algorithm which takes as inputs a parameter $\epsilon > 0$, a graph $G$ with a vertex partition where $b^{\text{min}} \geq (2+\epsilon) \Delta$, and a vertex $v \in V(G)$, and returns  an IT $M$ containing $v$. For fixed $\epsilon$, the expected runtime is $\poly(n)$.
\end{corollary}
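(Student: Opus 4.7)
The plan is essentially to invoke Corollary~\ref{2v-avoid} as a black box with $v_1 = v_2 = v$. Since trivially $\mathcal V(v_1) = \mathcal V(v_2)$ when $v_1 = v_2$, the hypotheses of Corollary~\ref{2v-avoid} are satisfied as soon as the hypotheses of Corollary~\ref{1v-avoid} are. Applying that corollary then returns two ITs $M_1, M_2$ with $v \in M_1 \cap M_2$ and $M_1 \setminus \{v\} = M_2 \setminus \{v\}$, so $M_1 = M_2$; we simply set $M := M_1$ and return it.

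To make the reduction completely explicit: inside Corollary~\ref{2v-avoid} one constructs $G' = G[V(G) \setminus \mathcal V(v)]$ with partition $\mathcal V \setminus \{\mathcal V(v)\}$, and applies Theorem~\ref{avoid-thm} with $L = N(v_1) \cup N(v_2) = N(v)$. With $v_1 = v_2 = v$, we have $|L| \leq \Delta(G) < b^{\min}(\mathcal V)$, and $G'$ inherits the condition $b^{\min} \geq (2+\epsilon)\Delta(G) \geq (2+\epsilon)\Delta(G')$, so Theorem~\ref{avoid-thm} applies and produces an IT $M'$ of $G'$ disjoint from $N(v)$. Then $M := M' \cup \{v\}$ is an IT of $G$ containing $v$, because $v$ has no neighbour in $M'$ and $M'$ already transversally covers every block other than $\mathcal V(v)$.

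There is no real obstacle here; the only thing worth double-checking is the bookkeeping that $b^{\min}$ and $\Delta$ do not degrade when restricting to $G'$, which is immediate since passing to an induced subgraph on a union of blocks removes a block entirely and only deletes edges. The runtime bound follows directly from that of Corollary~\ref{2v-avoid} (equivalently, from Theorem~\ref{avoid-thm}), which is $\poly(n)$ in expectation for fixed $\epsilon$.
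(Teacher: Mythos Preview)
Your proposal is correct and follows exactly the paper's approach: the paper proves Corollary~\ref{1v-avoid} simply by applying Corollary~\ref{2v-avoid} with $v_1 = v_2$. Your additional unpacking of the internals of Corollary~\ref{2v-avoid} is accurate but unnecessary for the argument.
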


As some examples, a construction in \cite{Lo2019} uses a (non-algorithmic) version of Corollary~\ref{2v-avoid}. We will also use Corollary~\ref{1v-avoid} in our application to strong colouring next.  This demonstrates the power of weighted independent transversals and Theorem~\ref{main}, even in contexts without an  overt weight function.

\section{Strong Colouring}\label{SC}
Aharoni, Berger, and Ziv~\cite{Aharoni2007} showed that the strong chromatic number is at most $3\Delta$ using an  extension of Theorem~\ref{th1ex} giving a sufficient condition for the existence of an IT containing a specified vertex.  Using Corollary~\ref{1v-avoid} for this instead, we obtain the following strong colouring algorithm:

\begin{corollary}\label{randstrong}
There is a randomized algorithm that takes as input a graph $G$ with a $b$-regular vertex partition $\mathcal V$ where $b \ge(3+\epsilon)\Delta$, and returns a strong $b$-colouring of $G$ with respect to $\mathcal V$. For fixed $\epsilon$, the expected runtime is $\poly(n)$.
\end{corollary}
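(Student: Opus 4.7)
The plan is to closely mimic the non-constructive proof of Aharoni, Berger, and Ziv~\cite{Aharoni2007} for $s\chi(G) \leq 3\Delta(G)$, replacing their use of Haxell's IT-through-a-vertex existence theorem with our Corollary~\ref{1v-avoid}. As the passage preceding the statement of Corollary~\ref{randstrong} indicates, the only non-trivial combinatorial ingredient in ABZ's proof is the existence of an IT containing a specified vertex, valid whenever $b^{\mathrm{min}}(\mathcal V) \geq 2\Delta+1$. Corollary~\ref{1v-avoid} provides the algorithmic analogue under the slightly stronger hypothesis $b^{\mathrm{min}}(\mathcal V) \geq (2+\epsilon')\Delta$, running in expected $\poly(n)$ time for fixed $\epsilon'$. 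The resulting algorithmic loss of $\epsilon'$ is exactly what is absorbed by the extra slack in our hypothesis $b \geq (3+\epsilon)\Delta$ compared to ABZ's $b \geq 3\Delta$.

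Concretely, the algorithm iterates for $i = 1, 2, \ldots, b$, producing disjoint ITs $M_1, \ldots, M_b$ that become the colour classes. At step $i$, let $G_i := G - \bigcup_{j<i} M_j$ be the uncoloured subgraph; its partition $\mathcal V_i$ (the trace of $\mathcal V$ on $V(G_i)$) is regular with blocksize $b_i := b - i + 1$, and $\Delta(G_i) \le \Delta(G)$. Following ABZ, we select a suitable target vertex $v_i$ in $G_i$ and invoke Corollary~\ref{1v-avoid} on $(G_i, \mathcal V_i, v_i)$ with parameter $\epsilon' := \epsilon/2$ to obtain an IT $M_i \ni v_i$ of $G_i$. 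The disjoint union $M_1 \sqcup \cdots \sqcup M_b = V(G)$ is a strong $b$-colouring by construction, since each $M_i$ is independent, meets each block exactly once, and the $M_i$ are pairwise disjoint.

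The main obstacle is verifying that the precondition $b_i \geq (2+\epsilon')\Delta(G_i)$ of Corollary~\ref{1v-avoid} holds throughout all $b$ iterations. In the tight ABZ setting the analogous precondition $b_i \geq 2\Delta(G_i)+1$ is maintained by a combinatorial invariant on $G_i$ that exploits the initial slack $b/\Delta \geq 3$; our task is to redo the same bookkeeping with the threshold $2$ replaced by $2+\epsilon'$ and the starting ratio $3$ replaced by $3+\epsilon$. Since the ABZ invariant is monotone in the slack parameter, inserting $\epsilon'$ uniformly throughout the argument carries the analysis through. Once this is verified, the total runtime is $\poly(n)$, since each of the $b \leq n$ iterations makes one call to Corollary~\ref{1v-avoid}, which runs in expected $\poly(n)$ time for fixed $\epsilon'$.
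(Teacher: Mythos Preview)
Your high-level plan---replace the IT-through-a-vertex existence theorem in the ABZ argument by Corollary~\ref{1v-avoid}---is exactly right, and is what the paper does. But the iterative scheme you describe is \emph{not} the ABZ argument, and it does not work.

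You propose to peel off disjoint ITs $M_1,\dots,M_b$ greedily, applying Corollary~\ref{1v-avoid} to $G_i=G-\bigcup_{j<i}M_j$ with blocksize $b_i=b-i+1$. The precondition you need is $b_i\ge(2+\epsilon')\Delta(G_i)$, and you assert that an ``ABZ invariant'' maintains this. No such invariant exists: removing an IT lowers every block size by exactly one, but need not lower $\Delta(G_i)$ at all (a vertex may have no neighbour in $M_i$). So $\Delta(G_i)$ can stay at $\Delta(G)$ while $b_i$ drops to $1$, and the precondition fails as soon as $i>b-(2+\epsilon')\Delta(G)$. There is no clever choice of target vertex $v_i$ that repairs this.

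The actual ABZ argument (sketched in the paper's proof) is a partial-colouring-with-swaps procedure, not greedy IT removal. One maintains a partial strong $b$-colouring $c$; given an uncoloured vertex $v$ and a colour $\alpha$ unused on $\mathcal V(v)$, one deletes from each block $U$ the vertices whose current colour appears on the neighbourhood of the $\alpha$-coloured vertex $w_U\in U$ (at most $\Delta(G)$ vertices per block), obtaining $G'$ with $b^{\min}\ge b-\Delta(G)\ge(2+\epsilon)\Delta(G)$. Corollary~\ref{1v-avoid} then yields an IT $Y\ni v$ in $G'$; recolouring each $y_U\in Y$ with $\alpha$ and each $w_U$ with $c(y_U)$ gives a valid partial colouring with strictly more coloured vertices. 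The crucial point is that the deletion is fresh at every step and always removes at most $\Delta(G)$ vertices per block, so the blocksize seen by the IT subroutine is always at least $b-\Delta(G)$, never $b-i+1$.
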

\begin{proof}
The proof is essentially the same as that of~\cite{Aharoni2007}, so we just give a sketch. Consider a partial strong $b$-colouring $c$ of $G$ with respect to $\mathcal V$, an uncoloured vertex $v$, and a colour $\alpha$ not used by $c$ on the block $\mathcal V(v)$. Define a new graph $G'$ by removing from each block $U$ the vertices whose colour appears on the neighbourhood of the vertex $x_U$ in $U$ coloured $\alpha$ (if it exists). This reduces the size of each block by at
most $\Delta(G)$. Then we apply Corollary~\ref{1v-avoid} to find an IT $Y$ of $G'$ containing $v$. As shown in~\cite{Aharoni2007}, if we modify $c$ by giving each vertex $y\in Y$ colour $\alpha$ and the corresponding vertex $x_{\mathcal V(y)}$ colour $c(y)$, we obtain a partial strong $b$-colouring with strictly fewer uncoloured vertices than $c$ (in particular it colours $v$). Hence in at most $n$ such steps we get a strong $b$-colouring of $G$.
\end{proof}
Analogous to the connection between chromatic number and fractional chromatic number, there is a fractional version of strong colouring for a graph $G$ with a $b$-regular vertex partition. By LP duality, this has two equivalent definitions:
\begin{itemize}
\item (Primal) For all weight functions $w: V(G) \rightarrow \mathbb R$, there is an IT $M$ of $G$ with $w(M) \geq w(G)/b$.
\item (Dual)  There exists a function $f$ mapping each IT $M$ of $G$ to a real number $f(M) \in [0,1]$ such that $\sum_M f(M) = b$ and for all vertices $v \in V(G)$ it holds $\sum_{M \ni v} f(M) = 1$.
\end{itemize}

Observe that if the function $f$ in the dual definition takes values in the range $f(M) \in \{0, 1 \}$, then $f$ is a strong 
colouring with respect to the vertex partition. The fractional version of the strong colouring conjecture mentioned in Section~\ref{intro} was shown by \cite{Aharoni2007}:
\begin{theorem}[\cite{Aharoni2007}]\label{ABZfrac}
Every graph $G$ is fractionally strongly $2\Delta(G)$-colourable.
\end{theorem}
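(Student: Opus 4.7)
I would establish the primal formulation of fractional strong $b$-colourability for $b = 2\Delta(G)$, which by the LP-duality remark preceding the theorem statement is equivalent to the stated conclusion. Thus I would fix an arbitrary $b$-regular partition $\mathcal V$ of $G$ and an arbitrary weight function $w: V(G) \to \mathbb R$, and the task reduces to producing an IT $M$ of $G$ with $w(M) \geq w(G)/b$.

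The main tool is Theorem~\ref{ABZ}, but a direct application yields only a PIT rather than an IT, and the boundary case $\delta = 1/2$ (corresponding to $b = 2\Delta(G)$) is exactly the one excluded by the hypothesis of Theorem~\ref{ABZ-thm4}. I would bridge this gap by a weight-shift trick. Set $\tilde w(v) = w(v) + C$ for a large constant $C$ to be fixed at the end. The uniform vector $\gamma_v = 1/b$ is feasible for $\mathcal P^*_{G, 1/2}$: the block constraint becomes $b \cdot (1/b) = 1$, the degree constraint becomes $|N(v)|/b \leq \Delta(G)/b = 1/2$, and the box constraints are trivial. The objective value at this $\gamma$ is $\tilde w(G)/b$, so $\tau^*_{G, \tilde w, 1/2} \geq \tilde w(G)/b$. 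Theorem~\ref{ABZ} then produces a PIT $P$ with $\tilde w(P) \geq \tilde w(G)/b$, which after unpacking the shift rearranges to
\[
w(P) \geq \frac{w(G)}{b} + C\bigl(|\mathcal V| - |P|\bigr).
\]

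The crucial observation is that $w(P) \leq \sum_v \max(w(v),0)$ is an upper bound independent of $C$, while $|\mathcal V| - |P| \geq 0$ since $P$ is a PIT. Choosing $C$ strictly larger than $\sum_v \max(w(v),0) - w(G)/b$ therefore forces $|P| = |\mathcal V|$, so that $P$ is actually an IT, and the inequality collapses to the desired $w(P) \geq w(G)/b$. The main obstacle is precisely this PIT-versus-IT mismatch at the critical threshold $\delta = 1/2$, which the shift argument resolves by making any missed block prohibitively expensive under the boosted weights; all other steps are routine feasibility checks in the LP $\mathcal P^*_{G, 1/2}$.
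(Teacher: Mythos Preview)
The paper does not give its own proof of Theorem~\ref{ABZfrac}; it is quoted from \cite{Aharoni2007}, with only the remark in Section~\ref{pit-sec} that it is derived there from Theorem~\ref{ABZ}. Your argument is a correct and self-contained derivation from Theorem~\ref{ABZ}: the feasibility of $\gamma_v = 1/b$ in $\mathcal P^*_{G,1/2}$ is exactly as you check, and the weight-shift device is sound because the PIT $P$ produced by Theorem~\ref{ABZ} for the shifted weight $\tilde w$ satisfies $w(P) + C|P| \geq w(G)/b + C|\mathcal V|$, while $w(P)$ is bounded by a quantity independent of $C$, forcing $|P| = |\mathcal V|$. So your proof fills in what this paper leaves as a citation, and does so along the route the paper indicates (via Theorem~\ref{ABZ}); there is nothing to correct.
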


Theorem~\ref{ABZfrac}, in terms of the primal definition of fractional strong colouring, is simply a restatement of Theorem~\ref{th2ex}. Theorem~\ref{main} can be viewed as an algorithmic counterpart. There is also a generic method of \cite{carr-vempala} to convert this into an algorithmic version of the dual definition.  We quote the following crisp formulation of \cite{anegg}:
 \begin{theorem}[\cite{anegg}]
\label{prim-dual-thm}
 Suppose that $\mathfrak S$ is a collection of subsets of ground set $U$ with associated weights $g_u$ for each $u \in U$. Suppose that there is a polynomial-time algorithm which takes as input a weight function $w: U \rightarrow \mathbb R_{\geq 0}$, and returns some $S \in \mathfrak S$ with $\sum_{u \in S} w(u) \geq \sum_{u \in U} w(u) g_u$. 
 
 Then there is a polynomial-time algorithm to generate a subcollection $\mathfrak S_0 \subseteq \mathfrak S$ with $|\mathfrak S_0| \leq |U|$,  with associated weights $\lambda: \mathfrak S_0 \rightarrow [0,1]$, such that $\sum_{S \in \mathfrak S_0} \lambda(S) = 1$ and for every $u \in U$, it holds that $\sum_{S \in \mathfrak S_0: u \in S} \lambda(S) \geq g_u$.
 \end{theorem}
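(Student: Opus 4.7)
The plan is to combine LP duality with the ellipsoid method.

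Consider the polytope $P = \{\lambda \in \mathbb{R}_{\geq 0}^{\mathfrak S} : \sum_S \lambda(S) = 1,\ \sum_{S \ni u} \lambda(S) \geq g_u \text{ for every } u \in U\}$; the target is to produce $\lambda \in P$ with support of size at most $|U|$. First I would verify $P \neq \emptyset$ via Farkas' lemma: infeasibility of $P$ would produce a non-negative vector $y$ on $U$ and a scalar $\beta$ with $\sum_{u \in S} y(u) \leq \beta$ for all $S \in \mathfrak S$ but $\sum_u y(u) g_u > \beta$. Normalizing $y$ to a probability distribution $p$ on $U$ would then contradict the hypothesized guarantee of the oracle on input $w := p$, which returns some $S \in \mathfrak S$ with $\sum_{u \in S} p(u) \geq \sum_u p(u) g_u$.

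Next I would use the ellipsoid method to construct an explicit $\lambda \in P$. The primal LP has exponentially many variables, while its dual has polynomially many variables and exponentially many constraints indexed by $\mathfrak S$. The oracle serves as a polynomial-time separation oracle for this dual: given a candidate dual point, its $y$-coordinates (after normalization) define a non-negative weight function $w$ on $U$, and the oracle either returns a set $S$ whose weight suffices to certify a violated dual constraint, or else no violated constraint exists. By the standard equivalence of separation and optimization, this produces an explicit $\lambda \in P$ of polynomial support in polynomial time.

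Finally I would sparsify to support at most $|U|$ by a Carath\'eodory-type argument. The polytope $P$ is cut out by one equality, $|U|$ covering inequalities, and non-negativity, so its basic feasible solutions have support of size at most $|U|+1$; a basic feasible solution can be extracted from the polynomial-support $\lambda$ by iteratively solving a small LP restricted to the current support and zeroing out a coordinate. Using the equality $\sum_S \lambda(S) = 1$ to eliminate one variable (or exploiting slack in any non-tight covering inequality) then trims the support to exactly $|U|$.

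The main obstacle I anticipate is the usual gap between the approximate output of the ellipsoid method and exact satisfaction of the covering constraints. This is handled in the standard way by solving a mildly tightened version of the LP (replacing each $g_u$ by $g_u + \varepsilon$), which the oracle continues to support by a small perturbation argument, and then passing to the limit $\varepsilon \to 0$ with a rounding step on the finite-support candidate.
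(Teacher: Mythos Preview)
The paper does not prove Theorem~\ref{prim-dual-thm}; it is quoted verbatim from \cite{anegg} (with the underlying technique attributed to Carr--Vempala \cite{carr-vempala}) and then applied as a black box to obtain Theorem~\ref{prim-dual-thm2}. There is therefore no proof in the paper to compare your proposal against.

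For what it is worth, your outline is the standard route for this result and is essentially what \cite{carr-vempala,anegg} do: interpret the weighting oracle as a separation oracle for the dual LP, run the ellipsoid method to produce a primal feasible $\lambda$ supported on polynomially many sets, and then sparsify via a Carath\'eodory-type argument. One step is underspecified: your basic-feasible-solution count yields support at most $|U|+1$, and the remark about ``using the equality $\sum_S\lambda(S)=1$ to eliminate one variable'' does not by itself drop a set from the support. To reach $|U|$ one actually uses the nonnegativity of the incidence vectors $\chi_S\in\{0,1\}^{U}$: if more than $|U|$ of them carry positive weight they admit a nontrivial linear relation $\sum_S\mu_S\chi_S=0$, and moving $\lambda$ along $\pm\mu$ (with a rescaling to maintain $\sum_S\lambda(S)=1$) in the direction that does not decrease any coordinate of $\sum_S\lambda(S)\chi_S$ eventually zeroes out some $\lambda(S)$. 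This is short but is a genuine argument, not a bookkeeping remark. The ``ellipsoid only gives approximate feasibility'' worry you raise at the end is handled in \cite{anegg} without the perturb-and-limit workaround, since once the ellipsoid method has identified a polynomial-size family of sets one simply re-solves the (now polynomial-size) primal LP exactly over those sets.
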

 
 As an immediate corollary of Theorem~\ref{prim-dual-thm} and Theorem~\ref{main}, we obtain the following:
  \begin{theorem}
\label{prim-dual-thm2}
There is a randomized algorithm which takes as input a parameter $\epsilon > 0$ and a graph $G$ with $b$-regular vertex partition $\mathcal V$ where $b \geq (2 + \epsilon) \Delta$, and finds ITs $I_1, \dots, I_n$ with associated weights $f_1, \dots, f_n \geq 0$, such that $f_1 + \dots + f_n = b$ and for every vertex $v \in V$ it holds that $\sum_{i: v \in I_i} f_i = 1$. For fixed $\epsilon$, the expected runtime is $\poly(n)$.
\end{theorem}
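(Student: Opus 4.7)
The plan is to invoke Theorem~\ref{prim-dual-thm} with the following data: take the ground set $U = V(G)$, let $\mathfrak S$ be the collection of all ITs of $G$ with respect to $\mathcal V$, and assign uniform weights $g_u = 1/b$ to every vertex $u \in U$. The oracle hypothesis required by Theorem~\ref{prim-dual-thm} is that, for any non-negative weight function $w: V(G) \to \mathbb R_{\geq 0}$, we can produce in polynomial time some $M \in \mathfrak S$ with $\sum_{u \in M} w(u) \geq \sum_{u \in U} w(u) g_u = w(G)/b$. But this is precisely what Theorem~\ref{main} delivers (in expected polynomial time for fixed $\epsilon$), since $b \geq (2+\epsilon)\Delta(G)$.

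Applying Theorem~\ref{prim-dual-thm} therefore produces a subcollection $\mathfrak S_0 \subseteq \mathfrak S$ of at most $n$ ITs, say $I_1, \dots, I_m$ with $m \leq n$, together with weights $\lambda(I_i) \in [0,1]$ satisfying $\sum_{i=1}^m \lambda(I_i) = 1$ and $\sum_{i: v \in I_i} \lambda(I_i) \geq 1/b$ for every $v \in V(G)$. I then rescale by setting $f_i = b \lambda(I_i)$, so that $\sum_{i} f_i = b$ and $\sum_{i: v \in I_i} f_i \geq 1$ for every vertex $v$. If $m < n$, pad the list by appending $n - m$ dummy ITs (any IT will do, e.g.\ repeats of $I_1$) with weight $f_i = 0$; this does not change any of the sums.

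The last step is to upgrade the inequality $\sum_{i: v \in I_i} f_i \geq 1$ to equality via a double-counting argument, exploiting the $b$-regularity of $\mathcal V$. Since $\mathcal V$ is $b$-regular, every IT has size exactly $|\mathcal V| = n/b$. Hence
\begin{equation*}
\sum_{v \in V(G)} \sum_{i: v \in I_i} f_i \;=\; \sum_{i=1}^n f_i \, |I_i| \;=\; \frac{n}{b} \sum_{i=1}^n f_i \;=\; \frac{n}{b} \cdot b \;=\; n.
\end{equation*}
As there are $n$ vertices and each contributes at least $1$ to this sum while the total is exactly $n$, every per-vertex sum must be exactly $1$, as required. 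The runtime bound is immediate: Theorem~\ref{prim-dual-thm} makes polynomially many calls to the oracle of Theorem~\ref{main}, each running in expected $\poly(n)$ time for fixed $\epsilon$. The only conceptually nontrivial point is recognising that the primal-to-dual conversion of \cite{anegg} applies, and that $b$-regularity forces the dual inequalities to be tight; everything else is bookkeeping.
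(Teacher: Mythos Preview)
Your proof is correct and follows essentially the same route as the paper: apply Theorem~\ref{prim-dual-thm} with $U=V(G)$, $\mathfrak S$ the set of ITs, $g_u=1/b$, using Theorem~\ref{main} as the oracle, then rescale by $b$ and argue that the per-vertex inequalities are forced to be equalities. The only cosmetic differences are that the paper's proof phrases the tightness argument per block (each IT hits each block exactly once, so the $b$ vertices of a block share total weight $\sum_i f_i = b$) rather than globally, and does not bother with the padding step.
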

\begin{proof}
Apply Theorem~\ref{prim-dual-thm} where $\mathfrak S$ is the collection of ITs of $G$, and where the ground set $U$ is $V(G)$, and where $g_v = 1/b$ for every vertex $v$. By Theorem~\ref{main}, we have a polynomial-time procedure to find an IT $S$ for any given weight function $w$ with $\sum_{v \in S} w(v) \geq \sum_{v \in V} w(v)/b$. 
By Theorem~\ref{prim-dual-thm}, we get a collection $\mathfrak S_0 = \{I_1, \dots, I_n \}$ of ITs and weights $\lambda$ with $\sum_{I \in \mathfrak S_0: v \in I} \lambda(I) \geq 1/b$ for all $v$ and $\sum_{I \in \mathfrak S_0} \lambda(I) = 1$. We set $f_i  = b \lambda(I_i)$ for each $i$. Since each $I \in \mathfrak S_0$ has exactly one vertex in each block, we must have $\sum_{i: v \in I_i} f_i = 1$ for all $v$. 
\end{proof}

\section{Acknowledgments}
Thanks to journal and conference reviewers for many helpful corrections and suggestions.

\end{document}